\newtheorem{theorem}{Theorem}[section]
\newtheorem*{theorem*}{Theorem}
\newtheorem{definition}{Definition}[section]
\newtheorem*{definition*}{Definition}
\newtheorem{proposition}{Proposition}[section]
\newtheorem*{proposition*}{Proposition}
\newtheorem*{corollary*}{Corollary}
\newtheorem{lemma}{Lemma}[section]
\newtheorem*{claim*}{Claim}
\newcommand{\nid}{\mathrm{N}}
\newcommand{\id}{\mathrm{E}}
\newcommand{\lra}{\hspace{1pt}{\leftrightarrow}\hspace{1pt}}
\newcommand{\abs}[1]{\left|#1\right|}
\newcommand{\ftriv}{f_\textnormal{tr}}
\DeclareMathOperator{\K}{K}
\DeclareMathOperator{\C}{C}
\DeclareMathOperator{\floor}{floor}
\DeclareMathOperator{\ceil}{ceil}
\DeclareMathOperator{\osc}{osc}
\DeclareMathOperator*{\avg}{\mathlarger{\mathbb{E}}}
\renewcommand{\mid}{\,|\,}
\renewcommand{\O}{{O}}
\newcommand{\eqnumk}{\tagform@{\texttt{k}} }
\newlength{\miniparindent}
\renewenvironment{samepage}{
  \noindent
  \begin{minipage}{\textwidth}
  \setlength{\parindent}{\miniparindent}
  \noindent
}{
  \end{minipage}
}
\newcommand{\board}[3][1]{
   \foreach \x in {0,...,#2}
     \draw (\x,0) -- (\x,#3/2);
   \foreach \y in {0,...,#3}
     \draw (0,\y/2) -- (#2,\y/2);
     \draw (0, #3/2-0.25) node[anchor=east] {\small{$n-1$}}; 
     \draw (0, 0.25) node[anchor=east] {\small{$0$}}; 
     \node (N) at (#2/2,-0.6) {\small{$2^n$}};
     \draw[<-] (0, -0.6) -- (N);
     \draw[->] (N) -- (#2,-0.6);
     }
\newcommand{\wpawn}[2]{\draw ( #1 + 0.25 , #2/2 + 0.25 ) circle(1.5mm);}
\newcommand{\bpawn}[2]{\filldraw ( #1 + 0.75 , #2/2 + 0.25 ) circle(1.5mm);}
\begin{document}

\title{The normalized algorithmic information distance can not be approximated}
\author{Bruno Bauwens, Ilya Blinnikov\footnote{
   National Research University Higher School of Economics, 
   11, Pokrovsky Boulevard, 109028, Moscow. 
   }
   }
   \date{}
\maketitle

\begin{abstract}
  It is known that the normalized algorithmic information distance is not computable and not semicomputable.
  We show that for all $\varepsilon < 1/2$, there exist no semicomputable functions that differ from $\nid$ 
  by at most~$\varepsilon$. Moreover, for any computable function $f$ such that $| \lim_t f(x,y,t) - \nid(x,y)| \le \varepsilon$ 
  and for all $n$, there exist strings $x,y$ of length $n$ such that
  $\sum_t \abs{f(x,y,t+1) - f(x,y,t)} \ge \Omega(\log n)$. This is optimal up to constant factors.

  We also show that the maximal number of oscillations of a limit approximation of $\nid$ is~$\Omega(n/\log n)$. 
  This strengthens the $\omega(1)$ lower bound from [K. Ambos-Spies, W. Merkle, and S.A. Terwijn, 2019, 
  {\em Normalized information distance and the oscillation hierarchy}]. 
\end{abstract}

\section{Introduction}

The information distance defines a metric on bit strings that in some sense takes all ``algorithmic regularities'' into account. 
This distance was defined in~\cite{infoDistance} as $\id(x,y) = \max \{\K(x \mid y), \K(y \mid x)\}$, where $\K(\cdot| \cdot)$ denotes conditional prefix Kolmogorov complexity relative to 
some fixed optimal prefix-free Turing machine; 
we refer to appendix~\ref{sec:introKolm} for the definition and basic properties, 
and to the books~\cite{LiVitanyiForthEdition,bookShenVereshchagin} for more background.
After minor modifications, this distance satisfies the axioms of a metric, as explained in subsection~\ref{ss:infoDist}.
We refer to~\cite{idRevisited} for an overview of many equivalent characterizations.

The distance is not computable. 
However, conditional Kolmogorov complexity is {\em upper semicomputable}, which means that
there exists a computable function $f\colon \{0,1\}^* \times \{0,1\}^* \times \mathbb N \rightarrow \mathbb Q$ 
for which $\K(x \mid y) = \lim_t f(x,y,t)$, and that is non-increasing in its last argument~$t$.
Hence, also $\id$ is upper semicomputable.

The distance $\id$ is useful to compare strings of similar complexity.
However, for strings of different complexity, a normalized variant is often preferable. 

\begin{definition}\label{def:nid} The  {\em normalized} algorithmic information distance of strings $x$ and $y$ is\footnote{
  The numerator is nonzero, even if $x=y$.
  $\K_U(x) \ge 1$ holds for every choice of the optimal prefix-free Turing machine $U$,
  because such machines never halt on input the empty string.
  Indeed, if it halted, then it would be the only halting program by the prefix property, and hence, the machine can not be optimal.
  }
\[
  \nid(x,y) \;=\; \frac{\max \{\K(x \mid y), \K(y \mid x)\}}{\max \{\K(x), \K(y)\}}.
\]
\end{definition}

\noindent
This normalized distance has inspired many applications in machine learning, where complexities are heuristically estimated using 
popular practical compression algorithms such as gzip, bzip2 and PPMZ, see \cite[section~8.4]{LiVitanyiForthEdition}.
Within small additive terms, the function $\nid$ has values in the real interval $[0,1]$ and satisfies the axioms of a metric: 
\begin{itemize}[leftmargin=*]
  \item 
    $0 \le \nid(x,y) \le 1 + \O(1/\K(x,y))$, 
  \item 
    $\nid(x,y) = \nid(y,x)$, 
  \item 
    $\nid(x,x) \le \O(1/\K(x))$, 
  \item 
    $\nid(x,y) + \nid(y,z) \ge \nid(x,z) - \O( (\log \K(y))/\K(y))$.
\end{itemize}
See~\cite[Theorem 8.4.1]{LiVitanyiForthEdition}.\footnote{
  In~\cite[Exercise 8.4.3]{LiVitanyiForthEdition} it is claimed that for the prefix variant of the normalized information distance, 
  one can improve the precision of the last item to~$\O(1/\K(x,y,z))$. 
  However, we do not know a proof of this.  If this were true, then with minor modifications of $\nid$ similar to those in appendix \ref{ss:infoDist}, 
  all axioms of a metric can be satisfied precisely. 
}

\medskip
In this paper, we study the computability of~$\nid$. Note that if Kolmogorov complexity were computable, then also~$\nid$ would be computable. 
But this is not the case, and  
in~\cite{NIDnonapprox} it is proven that $\nid$ is not upper semicomputable and not lower semicomputable, (i.e. $-\nid$ is not upper semicomputable). 
Below in Lemmas~\ref{lem:nid_not_lowersemicomputable} and~\ref{lem:nid_not_uppersemicomputable} we present simple proofs. 
In fact, in~\cite{NIDnonapprox} it is proven that 
(i) there exists no lower semicomputable function that differs from $\nid$ by at most some constant $\varepsilon < 1/2$, and 
(ii) there exists no upper semicomputable function that differs at most~$\varepsilon = (\log n)/n$ from $\nid$ on $n$-bit strings.
Theorem~\ref{th:inapprox} below implies that (ii) is also true for all~$\varepsilon < 1/2$.

By definition, $\nid$ is the ratio of two upper semicomputable functions, and hence it is {\em limit computable}, which means that 
there exists a computable function $f$ such that $\nid(x,y) = \lim_t f(x,y,t)$.
A function $f$ that satisfies this property is called a {\em limit approximation} of~$\nid$.

We define a {\em trivial limit approximation} $\ftriv$ of $\nid$ where $\ftriv(x,y,t)$ 
is obtained by replacing all appearances of $\K( \cdot)$ and $\K( \cdot | \cdot)$ in  Definition~\ref{def:nid} 
by upper approximations $\K_t(\cdot)$ and $\K_t(\cdot | \cdot)$, where $(x,t) \mapsto \K_t(x)$ 
is a computable function satisfying $\lim \K_t(x) = \K(x)$ and $\K_1(x) \ge \K_2(x) \ge \ldots$; 
and similar for $\K_t(\cdot |\cdot)$. We assume that $\K_1(x \mid y)$ and $\K_1(x)$ are bounded by $\O(n)$ 
for all $x$ of length~$n$.

\begin{lemma}\label{lem:trivialApprox}
  For all $n$ and strings $x,y$ of length at most $n$:
  \[
  \sum_{t=1}^\infty \abs{\ftriv(x,y,t+1) - \ftriv(x,y,t)} \;\;\le\;\; 2\ln n + \O(1).
  \]
\end{lemma}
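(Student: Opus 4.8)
The plan is to write $f_t:=\ftriv(x,y,t)=a_t/b_t$ with $a_t:=\max\{\K_t(x\mid y),\K_t(y\mid x)\}$ and $b_t:=\max\{\K_t(x),\K_t(y)\}$; both sequences are non-increasing in $t$, both are $\O(n)$ at $t=1$ when $\abs{x},\abs{y}\le n$, and by the footnote to Definition~\ref{def:nid} one has $a_t\ge a_\infty\ge 1$ and $b_t\ge b_\infty\ge 1$ for all $t$. A point to settle first, since the bound is false without it, is that the conditional approximations may be taken \emph{consistent} with the unconditional ones: $\K_t(x\mid y)\le\K_t(x)+c$ and $\K_t(y\mid x)\le\K_t(y)+c$ for all $t$, with $c$ a fixed constant. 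Indeed, if $c$ is large enough that $\K(x\mid y)\le\K(x)+c$ for all $x,y$, then replacing $\K_t(x\mid y)$ by $\min\{\K_t(x\mid y),\K_t(x)+c\}$ keeps it non-increasing in $t$, keeps its limit equal to $\K(x\mid y)$, and keeps it $\O(n)$ at $t=1$; similarly for $\K_t(y\mid x)$. Hence $a_t\le b_t+c$, and so $b_t\ge a_t-c$, for all $t$.

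The heart of the argument is a telescoping estimate on the increments. Regrouping,
\[
  f_{t+1}-f_t \;=\; a_{t+1}\bigl(\tfrac1{b_{t+1}}-\tfrac1{b_t}\bigr)\;-\;\frac{a_t-a_{t+1}}{b_t},
\]
which is a difference of two nonnegative quantities because $a$ and $b$ are non-increasing, so $\abs{f_{t+1}-f_t}$ is at most their sum. For the first summand, $a_{t+1}\le b_{t+1}+c$ gives the bound $\bigl(H_{b_t}-H_{b_{t+1}}\bigr)+c\bigl(\tfrac1{b_{t+1}}-\tfrac1{b_t}\bigr)$, where $H_k=\sum_{i=1}^k 1/i$ and where $\frac{b_t-b_{t+1}}{b_t}\le H_{b_t}-H_{b_{t+1}}$ because the right-hand side is a sum of $b_t-b_{t+1}$ terms each at least $1/b_t$. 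For the second summand, when $a_{t+1}\ge c$ the bound $b_t\ge a_t-c$ gives $\frac{a_t-a_{t+1}}{b_t}\le\frac{a_t-a_{t+1}}{a_t-c}\le H_{a_t-c}-H_{a_{t+1}-c}$, by the same elementary inequality shifted by $c$.

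Now sum over $t$. Both families of harmonic differences telescope: $\sum_t\bigl(H_{b_t}-H_{b_{t+1}}\bigr)=H_{b_1}-H_{b_\infty}\le H_{b_1}\le\ln n+\O(1)$ since $b_1=\O(n)$; the contribution of the second summands, over all $t$ with $a_{t+1}\ge c$, telescopes to at most $H_{a_1}\le\ln n+\O(1)$; and $\sum_t c\bigl(\tfrac1{b_{t+1}}-\tfrac1{b_t}\bigr)=c\bigl(\tfrac1{b_\infty}-\tfrac1{b_1}\bigr)=\O(1)$. The remaining steps are those at which $a$ strictly decreases with $a_{t+1}<c$; there are only $\O(c)=\O(1)$ of them, since once $a_{t+1}<c$ the value of $a$ stays below $c$ forever. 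Each such step contributes at most $\frac{a_t-a_{t+1}}{b_t}=\O(1)$: if $a_t\ge 2c$ then $b_t\ge a_t-c\ge a_t/2$, and if $a_t<2c$ then $b_t\ge 1$ suffices. Adding everything yields $\sum_t\abs{f_{t+1}-f_t}\le 2\ln n+\O(1)$.

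The only genuine obstacle is the consistency step: $\ftriv$ is a ratio of maxima of four independently specified monotone approximations, and if these are left unrelated the ratio can have total variation $\Omega(n)$, so the proof has to pin this down at the start. Everything afterwards is bookkeeping with harmonic numbers; the single point requiring care is keeping the leading coefficient of $\ln n$ exactly $2$ — rather than a larger multiple — when $a_t$ and $b_t$ are of constant size, which is why the small-$a$ steps are treated separately above.
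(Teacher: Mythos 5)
Your proof is correct and takes essentially the same route as the paper: the paper reduces the lemma to its Lemma~\ref{lem:upperbound} on monotone sequences with $a_i \le b_i + c$, and proves that by splitting each increment of $a_i/b_i$ into a numerator part and a denominator part, each telescoping to a logarithm (via a continuous variant), which is exactly your decomposition carried out discretely with harmonic numbers and with the $c$-terms handled explicitly. Your consistency step ($\K_t(x\mid y)\le \K_t(x)+c$ pointwise in $t$) is precisely the hypothesis $a_i\le b_i+c$ that the paper's lemma assumes implicitly for $\ftriv$, so making it explicit is a reasonable (indeed necessary) reading of the statement rather than a deviation.
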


\begin{definition*}
  An {\em $\varepsilon$-approximation} of a function $g$ is a limit approximation of a function $g'$ 
  with $g - \varepsilon \le  g' \le g + \varepsilon$.
\end{definition*}

\noindent
For a suitable choice of $U$, we have $0 \le \nid \le 1$, and
the function defined by $f(x,y,t) = 1/2$ is a $(1/2)$-approximation.\footnote{
  For general optimal $U$, and for $\varepsilon > 1/2$, we can obtain an $\varepsilon$-approximation that is constant in $t$
  by choosing $f(x,y,t) = \nid(x,y)$ for some finite set of pairs $(x,y)$, and by choosing $f(x,y,t) = 1/2$ otherwise.
  }
We show that for $\varepsilon < 1/2$ and every $\varepsilon$-approximation, the sum in the above lemma is at least logarithmic.

\newcommand{\thInapprox}{
  Let $f$ be an $\varepsilon$-approximation of $\nid$ with $\varepsilon < 1/2$. For large $n$:
  \[
    \max_{x,y \in \{0,1\}^n} \;\sum_{t=1}^\infty \abs{f(x,y,t+1) - f(x,y,t)} \;\;\ge\;\; \tfrac{1}{100} \cdot (1-2\varepsilon)^2 \cdot \log n.
  \]
}
\begin{theorem}\label{th:inapprox}
  \thInapprox
\end{theorem}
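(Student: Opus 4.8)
The plan is to first reduce the total-variation question to a question about limit approximations with a bounded number of \emph{mind changes}, and then to prove a mind-change lower bound by an adversary construction.

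\emph{Step 1 (from total variation to mind changes).} Suppose towards a contradiction that $f$ is an $\varepsilon$-approximation of $\nid$ with $\sum_t \abs{f(x,y,t+1)-f(x,y,t)} \le V$ for all $x,y\in\{0,1\}^n$. Fix a small rational $\delta>0$ and build from $f$ a new computable approximation $h$ by ``rounding with hysteresis'': $h(x,y,\cdot)$ keeps its current value as long as $f(x,y,\cdot)$ stays within $\delta$ of it, and otherwise jumps to a rounding of the current value of $f$. Then $h(x,y,\cdot)$ changes at most $2V/\delta$ times (each change is preceded by a movement of $f$ by at least $\delta/2$, which is charged against the total variation), its limit is within $\delta$ of $\lim_t f(x,y,t)$, so $h$ is an $(\varepsilon+2\delta)$-approximation of $\nid$ that on $n$-bit inputs makes at most $2V/\delta$ mind changes. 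So it suffices to prove: for every $\varepsilon'<1/2$ and every limit approximation $h$ of some $\nid'$ with $\abs{\nid'-\nid}\le\varepsilon'$, there are $x,y\in\{0,1\}^n$ on which $h$ changes its value at least $\Omega\!\big((1-2\varepsilon')\log n\big)$ times. Taking $\delta=(1-2\varepsilon)/8$, so that $1-2\varepsilon'=(1-2\varepsilon)/2$, this yields $2V/\delta\ge\Omega\big((1-2\varepsilon)\log n\big)$, hence $V\ge\Omega\big((1-2\varepsilon)^2\log n\big)$; tracking the constants gives the stated $1/100$.

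\emph{Step 2 (the mind-change lower bound, the heart of the matter).} Fix constants $0<\alpha<\beta\le 1$ with $\beta-\alpha>2\varepsilon'$ (e.g. $\beta=1$ and $\alpha$ slightly below $1-2\varepsilon'$). I would build $x,y\in\{0,1\}^n$ together with a computable supply of prefix programs — equivalently, reason inside an auxiliary prefix machine and transfer to $\K$ at $\O(1)$ cost — so that the approximations $\K_t(x),\K_t(y),\K_t(x\mid y),\K_t(y\mid x)$ pass through a prescribed sequence of symmetric ``profiles''. Starting from all four being $\approx n$ (apparent $\nid\approx 1$), the construction alternates two moves: a \emph{down move} publishes shorter programs for $x$ given $y$ and for $y$ given $x$, dropping $\max\{\K_\cdot(x\mid y),\K_\cdot(y\mid x)\}$ so the apparent value of $\nid$ becomes $\approx\alpha$; an \emph{up move} publishes shorter programs for $x$ and for $y$, dropping $\max\{\K_\cdot(x),\K_\cdot(y)\}$ so the apparent value becomes $\approx\beta$. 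After each move we run $h(x,y,\cdot)$ until it comes within $\varepsilon'$ of the current apparent value; since — were we to stop there — the constructed strings would genuinely satisfy $\abs{\nid(x,y)-(\text{apparent value})}=o(1)$, this must eventually happen, and because $\beta-\alpha>2\varepsilon'$ each move forces a fresh mind change. The number of moves is $\Theta(\log n)$: the profile values decrease geometrically (roughly by the factor $\alpha/\beta<1$ every two moves) and must stay above the $\O(\log n)$ overhead needed to address the individual moves, so the process runs for $\Theta\!\big(\tfrac{\log n}{\log(\beta/\alpha)}\big)$ moves; optimizing $\alpha,\beta$ subject to $\beta-\alpha>2\varepsilon'$ gives $\Omega\big((1-2\varepsilon')\log n\big)$ mind changes, which is what Step 1 needs.

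Two points require care. First, the ``publish shorter programs for $x$'' moves are circular: we want $x$ to become simple via the planted programs yet otherwise be incompressible, so that the profile is \emph{tight} (no cheaper programs appear prematurely and $\nid(x,y)$ really equals the last apparent value up to $o(1)$). This is handled in the standard way — the planted programs have total Kraft weight $o(1)$ (their lengths are $\Omega(\log n)$ and decrease geometrically), and $x,y$ are chosen, via the recursion theorem, maximally incompressible subject to exactly the planted programs, with tightness $=o(1)$ guaranteed because all profile values are kept above, say, $\log^2 n$. Second, the $\O(\log n)$ error terms (from symmetry of information, from addressing the moves, and from the $o(1)$ slack in $\nid$) must stay below the margin $\beta-\alpha-2\varepsilon'$ throughout, which is why the dance is stopped while the profile values are still $\gg\log n$; accounting for these terms is what pins down the concrete constant. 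The main obstacle, and the step I expect to take the most work, is making this adversary construction fully precise: arguing that the strings realizing the intended profiles exist and that every move genuinely costs $h$ one mind change.
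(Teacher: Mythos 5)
Your Step 1 (hysteresis rounding, trading total variation against mind changes at cost $\delta$ per change) is fine, and your two-level oscillation plan with geometrically shrinking profile values and $\Theta(\log n/\log(\beta/\alpha))$ rounds is indeed the right shape of argument. But Step 2 has a genuine gap, and it is exactly the step you flagged: the single pair $(x,y)$ realizing the prescribed complexity profiles cannot be produced the way you describe. Your adversary is a computable process (it must be: it interacts with the computable $h$ and must actually enumerate the planted programs so that, in a stalled run, $\nid(x,y)$ really equals the apparent value). Hence the pair it commits to is computable from $n$, so $\K(x),\K(y),\K(x\mid y)\le \K(n)+\O(1)=\O(\log n)$ from the very start -- the recursion theorem lets the construction know its own index, but it cannot make a computably selected string incompressible; ``maximally incompressible subject to exactly the planted programs'' is a co-c.e.\ requirement you cannot meet while committing to $(x,y)$ in finite time. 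Concretely, after your down move the true approximations of the denominator $\max\{\K_t(x),\K_t(y)\}$ can already sit far below your intended profile value (a short description ``simulate the adversary on $n$'' exists from the beginning), so the apparent value need not be $\approx\alpha$, the tightness claim fails, and the ``if we stall forever then $h$ fails to approximate'' argument collapses because in the stalled run $\nid(x,y)$ is not the value you waited for.

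This uncontrollable dropping of the denominator is not a technicality but the central difficulty, and the paper's proof is organized around it: instead of one pair, Alice keeps exponentially large disjoint sets $U',V'$ of candidates and plays the game $\mathcal H_{n,\varepsilon,a}$, in which drops of $\K_t(\cdot)$ appear as Bob lowering column values $X_u$ subject to the row restriction (at most $2^i$ strings can ever have complexity below $i+1$). After forcing the swing $Z_{uv}/\max\{X_u,X_v\}$ from $\ge 1-2\delta$ down to $\le 2\delta$ on a union $\bigcup_j U_j\times V_j$, Alice uses this counting to select a surviving sub-product $U_j\times V_j$ on which Bob could not lower the denominators, and recurses with $n'\leftarrow\lfloor\delta n'\rfloor$; the reduction lemma (Lemma~\ref{lem:epsApproximationGivesWinning}) then converts any $\varepsilon'$-approximation into a Bob who must lose, for $c=\O(\log n)$ absorbed by the $\sqrt n$ threshold. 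The pair witnessing the $\Omega((1-2\varepsilon)^2\log n)$ total update is the one that survives all prunings and cannot be named in advance -- which is precisely why your single-pair, recursion-theorem construction cannot be repaired locally; making it work forces you into the set-based bookkeeping, i.e.\ essentially the paper's game. (Your Step 1 then becomes unnecessary: each forced swing of width $>2\varepsilon$ already contributes $\delta$ directly to the total update.)
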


\noindent
This result implies that for each $\varepsilon < 1/2$, there exists no upper semicomputable function that differs from $\nid$ by at most~$\varepsilon$.

\bigskip
\noindent
We now state the main result of~\cite{NIDoscillations}. 

\begin{definition*}
Let $k \ge 1$. A sequence $a_1, a_2, \ldots$ of real numbers has {\em at most $k$ oscillations} if the sequence can be 
  written as a concatenation of $k$ sequences ($k-1$ finite and 1 infinite) such that each sequence is either 
  monotonically non-increasing or non-decreasing. The sequence has $0$ oscillations if $a_1 = a_2 = \ldots$
\end{definition*}

\noindent
The main result of~\cite{NIDoscillations} states that no $0$-approximation $f$ of $\nid$ 
has at most a constant number of oscillations. More precisely, for each $k$, there exists a pair $(x,y)$ such that 
$f(x,y,1), f(x,y,2), \ldots$ does not have at most $k$ oscillations.

Let $k \colon \mathbb N \rightarrow \mathbb N$. We say that $f$ {\em has at least $k(n)$ oscillations}, 
if for all $n$ there exists a pair $(x,y)$ of strings of length at most $n$, 
such that $f(x,y,1), f(x,y,2), \ldots$ does not have at most $k(n)-1$ oscillations.
(The proof of) Theorem~\ref{th:inapprox}  implies that if $\varepsilon < 1/2$, 
then any $\varepsilon$-approximation has at least $\Omega((1-2\varepsilon)^2 \log n)$ oscillations. 

The trivial $0$-approximation $\ftriv$ has at most $\O(n)$~oscillations, because each upper-approximation of Kolmogorov complexity 
in its definition is bounded by $\O(n)$ on $n$-bit strings, and hence, there can be at most this many updates.
Can it be significantly less than~$n$, for example at most $n/100$ for large $n$?

The answer is positive. For all constants $c$, there exist optimal machines $U$ in the definition of complexity $\K$ for 
which the number of updates of $\K_t$ is at most $n/c + O(\log n)$.
For example, one may select an optimal $U$ whose halting programs all have length $0$ modulo~$c$.
If $\nid$ is defined relative to such a machine, than the total number of updates is $2n/c + O(\log n)$.
Hence, for every constant~$e$ there exists a version of~$\nid$ and a $0$-approximation 
that has at most~$n/e$ oscillations for large input sizes~$n$.
Our second main result provides an almost linear lower bound on the number of oscillations.

\newcommand{\propOscillations}{
  Every $0$-approximation of $\nid$ has at least $\Omega(n/\log n)$ oscillations.
}

\begin{theorem} \label{th:oscillations}
  \propOscillations
\end{theorem}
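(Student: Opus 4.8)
My plan is to bootstrap from Theorem~\ref{th:inapprox} — or rather from the *method* behind it — by combining an information-theoretic argument with a counting/amplification trick to push the $\Omega(\log n)$ lower bound on *total variation* up to an $\Omega(n/\log n)$ lower bound on *number of oscillations*. The key observation is that a bound on total variation and a bound on the number of oscillations are very different: a limit approximation can have enormous total variation while making only one monotone "sweep," but conversely many oscillations are cheap in total-variation terms if each swing is tiny. So the construction used for Theorem~\ref{th:inapprox} cannot be reused verbatim; instead I expect the authors force a constant-size oscillation to appear *many times independently*, on roughly $n/\log n$ disjoint "coordinates" of a single cleverly chosen pair $(x,y)$ of length $\le n$.

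Concretely, I would partition the length budget $n$ into about $m = n/(C\log n)$ blocks of size $\approx C\log n$ each, and design $(x,y)$ so that, for each block $i$, the behavior of $\nid(x,y)$ "mimics" a small gadget pair $(x_i,y_i)$ whose NID value is not pinned down until a late stage $t$ — so that any $0$-approximation is forced to move across a fixed threshold in that block. The mechanism for a single block is the same dichotomy that drives the non-semicomputability lemmas: one arranges (by a recursion-theoretic fixed-point / diagonalization against the approximating machine $f$) that either $\K(x_i\mid y_i)$ is small and $\K(y_i\mid x_i)$ large, or vice versa, with the normalization $\max\{\K(x_i),\K(y_i)\}$ held essentially constant; which of the two cases occurs is decided adversarially against $f$ so late that $f$ has already committed to the wrong value, forcing one oscillation. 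The heart of the argument is then an *amplification lemma* showing these $m$ gadgets can be packed into one pair so that the oscillations they force are (a) all distinct oscillations of the single sequence $f(x,y,1),f(x,y,2),\ldots$, and (b) not "cancellable" by clever scheduling of $f$'s updates — i.e. each forced threshold crossing genuinely changes the monotonicity type. This is where the $\log n$ factor is lost: each gadget needs $\Theta(\log n)$ bits to encode its index and its late-decided bit, so only $n/\Theta(\log n)$ of them fit.

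The main obstacle, as I see it, is point (b): an adversarial $f$ might try to interleave the updates of different gadgets so that what looks like $m$ separate up-down moves collapses into a single long monotone run. To defeat this I would make the gadgets' threshold values *strictly interleaved* — gadget $i$ forces a crossing of a value $v_i$ with $v_1 < v_1' < v_2 < v_2' < \cdots$ chosen so that resolving gadget $i$ necessarily moves the global value in a direction opposite to the previous resolved gadget, regardless of order; alternatively, one shows that the *last* update contributed by each gadget must occur, and that between consecutive such "final" updates the value is forced to both exceed and fall below a fixed level, which is exactly the definition of an extra oscillation. A secondary technical point is the usual care with additive $O(1)$ and $O(\log n)$ slack in Kolmogorov complexity: since $\nid$ is only a metric up to such terms and the gadget values must be separated by $\omega($slack$)$, the block size $C\log n$ must be chosen large enough (hence the $1/\log n$ loss is genuine, matching the $n/\log n$ upper-bound discussion in the excerpt). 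I would also double-check that the fixed-point construction can be carried out uniformly in $n$ so that a single $0$-approximation $f$ is diagonalized against for every $n$ simultaneously, as the definition of "has at least $k(n)$ oscillations" demands.
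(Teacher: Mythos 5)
There is a genuine gap, on two levels. First, the gadget itself does not force anything: exchanging which of $\K(x_i\mid y_i)$, $\K(y_i\mid x_i)$ is small leaves $\max\{\K(x_i\mid y_i),\K(y_i\mid x_i)\}$ — and hence $\nid$ — unchanged, so $f$ never has to revise its value. More importantly, oscillations of an approximation of a ratio of \emph{upper semicomputable} quantities can only be produced by alternating, significant drops of the numerator and of the denominator (upper bounds never move up), so your plan to keep the normalization $\max\{\K(x_i),\K(y_i)\}$ ``essentially constant'' rules out the up-moves you need; any correct argument must spend part of the complexity budget on repeated denominator drops as well, which is exactly how the paper's strategy works (alternating decreases of $Z_{uv}$ and of $X_u$ by $\Theta(c)$ per cycle, giving $\Omega(n/c)$ cycles).

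Second, and this is the central difficulty your sketch does not address: a $0$-approximation is only required to converge in the limit, so nothing \emph{a priori} forces $f(x,y,t)$ to commit to the current ``apparent'' value of the ratio before you reveal the next drop. The natural fix — wait until $f(x,y,t)$ approaches the ratio of the current complexity upper bounds — can fail forever on any \emph{fixed} pair, because the true complexities (in particular $\K(x)$, $\K(y)$) may drop below, and at times outside, your control via the universal machine, changing the limit value and cancelling the oscillation you set up. A single pair per length $n$, even split into $n/\log n$ blocks decided by a fixed-point construction, has no way to hedge against this adversarial behaviour of the standard upper approximations; your interleaved-thresholds idea only handles the scheduling of $f$'s updates, not this. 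The paper's proof is built precisely around this obstacle: it reformulates the problem as a full-information game on two boards in which Bob (representing $f$ together with the upper approximations $\C_s$/$\K_s$) may also lower values, but subject to row restrictions encoding the counting bound ``at most $2^i$ programs of length $i$''; Alice then plays on \emph{exponentially many} candidate pairs at once, drops the numerator bound on a family $\bigcup_j U_j\times V_j$, uses an averaging argument to find a sub-box where Bob could not afford to lower the denominators, and recurses, losing only an additive $O(c)$ of scale per forced oscillation. Finally, the $\log n$ loss in Theorem~\ref{th:oscillations} does not come from encoding gadget indices, as you suggest: the length-conditional plain variant $\nid'$ already yields $\Omega(n)$ oscillations, and the factor $\log n$ is the cost $c=5\log n$ of converting Alice's token commitments into prefix-free programs (prefix codes for $n$, the row index and $c$) when passing from $\nid'$ to $\nid$.
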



\noindent
In an extended version of this article, we plan to improve the $\Omega(n/\log n)$ lower bound to an $\Omega(n)$ bound. 
This requires a more involved variant of our proof.

Theorems~\ref{th:inapprox}  and~\ref{th:oscillations} both imply that $\nid$ and hence Kolmogorov complexity is not computable.
In fact, they imply something stronger:
$\K(\K(x \mid y) \mid x,y)$ can not be bounded by a constant.\footnote{ 
  Indeed, if this were bounded by~$c$, 
  there would exist an upper approximation $f$ of $\K(\cdot \mid \cdot)$
  such that for each pair $(x,y)$, the function $f(x,y,\cdot)$ has only finitely many values.
  (We modify any upper approximation of complexity by only outputting
  values $k$ on input $x,y$, for which $\K(k \mid x,y) \le c$. There are at most $2^c$ such~$k$.)
  Hence, there would exist an approximation $f'$ of $\nid$ such that for all $x, y$, the function $f'(x,y, \cdot)$ has only finitely many values.
  Such functions would have only finitely many oscillations, contradicting  Theorem~\ref{th:oscillations},
  and a finite total update, contradicting  Theorem~\ref{th:inapprox}.
  }
It has been shown that $\K(\K(x) \mid x) \ge \log n - O(1)$, see~\cite{complexityOfComplexity,compcomp},
and our proofs are related.
Like the proof in~\cite{compcomp}, we also use game technique.
This means that we present a game, present a winning strategy, and show that this implies the result.
The game technique often leads to tight results with more intuitive proofs.
(Moreover, the technique allows to easily involve students in research, because after the game is formulated, 
typically no specific background is needed to find a winning strategy.)
For more examples of game technique in computability theory and algorithmic information theory, 
we refer to~\cite{KolmogorovGames}.

\section*{$\nid$ is not upper nor lower semicomputable}

For the sake of completeness, we present short proofs of the results in~\cite{NIDnonapprox}, 
obtained from Theorem 3.4 and Proposition 3.6 from~\cite{NIDoscillations} (presented in a
form that is easily accessible to people with little background in the field).\footnote{
  NOTE TO THE REVIEWER: papers about the information distance are sometimes cited by people from more applied research areas.
  In an optimistic scenario, there might exist such readers that want to read some initial segment of the paper, 
  and might not remember the proof of the uncomputability of $\K(\cdot)$. 
  Hence, I think it is good to keep the proof of Lemma~\ref{lem:K_has_no_lowerbound}. 
}
A function $g$ is  {\em lower semicomputable} if $-g$ is upper semicomputable. 

\begin{lemma}\label{lem:nid_not_lowersemicomputable}
$\nid$ is not lower semicomputable. 
\end{lemma}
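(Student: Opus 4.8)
The plan is to derive a contradiction from the assumption that $\nid$ is lower semicomputable, by exhibiting a family of string pairs on which a lower-semicomputable approximation would have to ``commit too early'' to a small value. The key tool is that conditional complexity $\K(x\mid y)$ has no nontrivial computable lower bound: more precisely, for every $n$ there are strings $x$ of length $n$ with $\K(x\mid y)$ close to $n$ (for any fixed $y$, or with $y$ empty), while any lower-semicomputable function, having approximated such a complexity from below, cannot rule out that the true value is large. First I would recall (or cite the forthcoming Lemma~\ref{lem:K_has_no_lowerbound}) that there is no computable unbounded lower bound on $\K$; concretely, if $g \le \K$ is lower semicomputable then $g$ is bounded. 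The same holds for $x \mapsto \K(x\mid y)$ with $y$ ranging over strings, since a lower-semicomputable lower bound there would, by dovetailing, give one for plain $\K$ up to $O(\log\abs y)$ additive error.

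Next I would set up the diagonalization. Fix a large $n$ and consider pairs $(x, x)$ versus pairs $(x, z)$ where $z$ is a fixed simple string (say the empty string or $0^n$). For $(x,x)$ we have $\nid(x,x) = \max\{1,1\}/\max\{\K(x),\K(x)\} = O(1/\K(x))$, which is tiny. For a pair $(x, 0^n)$ with $x$ chosen of high complexity (say $\K(x) \ge n$, $\K(x \mid 0^n) \ge n - O(1)$, $\K(0^n \mid x) = O(\log n)$), we get $\nid(x,0^n) = \K(x\mid 0^n)/\K(x) \ge 1 - O(\log n / n)$, which is close to $1$. So $\nid$ takes values near $0$ and near $1$ on easily-described families. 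The point of lower semicomputability is that a lower-semicomputable $f$ with $f \le \nid$ and $\lim_t f = \nid$ must at some finite stage output a value close to $1$ on the pair $(x, 0^n)$ whenever $\K(x\mid 0^n)$ is genuinely large — but it must do so without ``knowing'' $\K(x\mid 0^n)$, and this lets us construct, by searching for the first stage at which $f$ exceeds some threshold, a short description of a high-complexity $x$, contradicting incompressibility.

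More concretely, the main step: suppose $g$ is lower semicomputable with $\abs{g - \nid}$ small, in particular $g(x,0^n) > 1/2$ forces (for the incompressible $x$) that we have detected something. Enumerate, for each length $n$, the strings $x$ for which the approximation to $g(x,0^n)$ first exceeds $3/4$; because $g$ is lower semicomputable this is a computably enumerable process, and the index of $x$ in this enumeration (together with $n$) describes $x$. If too many $n$-bit $x$ have $\nid(x,0^n) > 3/4$ — which they do, since most $n$-bit strings are incompressible and have $\nid(x,0^n)$ near $1$ — then some such $x$ gets a small index, hence $\K(x) \le O(\log n)$ (or at least $\K(x\mid 0^n) \le O(\log n)$ via the enumeration relative to $0^n$, which is computable from $n$), contradicting $\K(x\mid 0^n) \ge n - O(1)$ for that $x$. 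The arithmetic just needs: the number of length-$n$ strings with $\K(x\mid 0^n) \ge n/2$, say, is $\ge 2^{n-1}$, all of which have $\nid(x,0^n) \ge 1/2$, so at least one has an enumeration index below $2^{n}$ but — more carefully — we pick one whose index is small relative to its own complexity. (The clean way: the enumeration is a partial computable list, so the first string on it of complexity $\ge n/2$ has complexity $O(\log n)$, contradiction.)

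The main obstacle I expect is making the counting argument genuinely tight and correctly handling the conditioning: one must ensure the enumeration used to describe $x$ is computable from only $O(\log n)$ bits of side information (essentially just $n$), and that the gap between the threshold ($3/4$) and the $1/2$ from approximation error leaves room given that $\nid(x,0^n)$ may dip slightly below $1$ because $\K(x\mid 0^n)/\K(x) < 1$. Choosing $y = 0^n$ (rather than $y=x$) is what makes the denominator $\K(x)$ and numerator $\K(x\mid 0^n)$ both close to $n$, so $\nid$ is close to $1$; the alternative pair $(x,x)$ with $\nid$ near $0$ is a red herring for the lower-semicomputable case and is really the tool for the upper-semicomputable case (Lemma~\ref{lem:nid_not_uppersemicomputable}). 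I would therefore keep the proof to: (1) cite no-lower-bound for conditional $\K$; (2) observe $\nid(x,0^n)\to 1$ for incompressible $x$; (3) run the c.e.\ ``first time $g$ exceeds $3/4$'' enumeration to compress such an $x$; (4) contradiction.
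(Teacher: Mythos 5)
Your setup (pairs $(x,0^n)$, the fact that $\nid(x,0^n)$ is close to $1$ for incompressible $x$, and the observation that lower semicomputability lets you enumerate strings whose approximation exceeds a threshold) is sound, but the final compression step has a genuine gap. The enumeration ``all $x$ of length $n$ whose lower approximation of $g(x,0^n)$ exceeds $3/4$'' contains almost every string of length $n$ (any $x$ with $\K(x)\gg\log n$ has $\nid(x,0^n)\ge 1-O(\log n)/\K(x)$), so indices in this list are up to $n$ bits long and give no compression; ``some such $x$ gets a small index'' only produces a string that is simple, and a simple string on the list is not in conflict with anything you have established about it --- in particular it need not satisfy $\K(x\mid 0^n)\ge n-O(1)$, which holds only for the incompressible strings you used to show the list is nonempty. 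Your proposed patch is not a fix: ``the first string on the list of complexity $\ge n/2$'' cannot be computably selected (that would require deciding complexities), and such a string has complexity $\ge n/2$ by definition, so it cannot simultaneously have complexity $O(\log n)$; what is true is only that the \emph{first} string enumerated has complexity $\le \K(n)+O(1)$.

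The repair is to turn the argument around, and it then becomes the paper's proof. Take the very first $x$ (or pair) whose approximation is certified $\ge 3/4$; this $x$ is computable from $n$, hence $\K(x\mid 0^n)\le O(1)$, $\K(0^n\mid x)\le O(1)$ and $\K(x)=\K(n)+O(1)$, so $\nid(x,0^n)\le O(1/\K(n))$, which tends to $0$. On the other hand, since the approximation is a \emph{lower} bound on the true value, membership in the list forces $\nid(x,0^n)\ge 3/4$ for that same $x$ --- a contradiction for large $n$. So the contradiction must be extracted from the string the search itself produces, not from a pre-chosen incompressible string; as written, your conflation of these two strings is where the proof fails. (Also, Lemma~\ref{lem:K_has_no_lowerbound} is not needed here; it is the tool for the upper-semicomputability direction.)
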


\begin{proof}
  Note that for large $n$, there exist $n$-bit $x$ and $y$ such that 
  \[
  \nid(x, y) \;\ge\; 1/2\,.
  \]
  Indeed, for any $y$, there exists an $n$-bit $x$ such that $\K(x \mid y) \ge n$. The denominator of $\nid$ is at most $n + O(\log n)$, and the inequality follows for large~$n$.

  Assume $\nid$ was lower semicomputable. On input $n$, 
  one could search for such a pair $(x,y)$, and we denote the first such pair that appears by $(x_n, y_n)$. 
  We have $\K(x_n) = \K(n)+\O(1)$
  and $\max\{\K(x_n \mid y_n), \K(y_n \mid x_n)\} \le \O(1)$. Hence $\nid(x_n,y_n) \le \O( 1/\K(n))$.
  For large $n$ this approaches~$0$, contradicting the equation above.
\end{proof}

\noindent
{\em Remark.} With the same argument, it follows that for any $\varepsilon < 1/2$, 
there exists no lower semicomputable function that differs from $\nid$ by at most~$\varepsilon$. 
Indeed, instead of $\nid(x,y) \ge 1/2$ we could as well use $\nid(x,y) \ge 1/2 + \varepsilon$, 
and search for $(x_n, y_n)$ for which the estimate is at least~$1/2$.

\medskip
\noindent
To prove that $\nid$ is not upper semicomputable, we use the following well-known lemma.

\begin{lemma}\label{lem:K_has_no_lowerbound}
The complexity function $\K(\cdot)$ has no unbounded lower semicomputable lower bound. 
\end{lemma}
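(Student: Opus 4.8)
The plan is to argue by contradiction via the classical "first string of high complexity" trick. Suppose $g$ is a lower semicomputable function with $g(x) \le \K(x)$ for all $x$ and with $g$ unbounded; that is, $\sup_x g(x) = \infty$. Since $g$ is lower semicomputable, there is a computable function $(x,t) \mapsto g_t(x)$, non-decreasing in $t$, with $\lim_t g_t(x) = g(x)$. The key observation is that, because $g$ is unbounded, for every threshold $m$ we can effectively search — dovetailing over all strings $x$ and all stages $t$ — until we find some pair with $g_t(x) > m$, and hence a string $x$ with $g(x) > m$. Let $x_m$ denote the first such string found by this search.

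The next step is to bound the complexity of $x_m$ from above. The search procedure is a single algorithm that takes $m$ as input and outputs $x_m$, so $\K(x_m) \le \K(m) + O(1) \le \log m + O(1)$ (using any fixed encoding; even $2\log m + O(1)$ suffices). On the other hand, by construction $g(x_m) > m$, and since $g$ is a lower bound for $\K$, we get $m < g(x_m) \le \K(x_m) \le \log m + O(1)$. For all sufficiently large $m$ this is a contradiction, since $\log m + O(1) < m$ eventually. Therefore no such $g$ exists.

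There is essentially no hard part here; the only point requiring a little care is the observation that unboundedness of $g$ guarantees the search for $x_m$ terminates for \emph{every} $m$: for each $m$ there is some $x$ with $g(x) > m$, and since $g(x) = \lim_t g_t(x)$ with $g_t$ non-decreasing, there is a stage $t$ with $g_t(x) > m$, so the dovetailed search is guaranteed to halt. One should also note that we only use that $g$ is a lower bound for $\K$ on the particular strings $x_m$ produced, so the statement is robust. This lemma will then feed into the proof that $\nid$ is not upper semicomputable: if it were, one could extract from it an unbounded lower semicomputable lower bound on $\K(\cdot)$, contradicting the lemma.
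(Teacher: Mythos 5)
Your proposal is correct and is essentially the paper's own argument: search, using the lower semicomputable approximation, for the first string whose lower bound exceeds the threshold, and derive the Berry-style contradiction $m < \K(x_m) \le \O(\log m)$. You merely spell out the dovetailing and termination details that the paper leaves implicit.
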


\begin{proof}
  This is proven by the same argument as for the uncomputability of $\K$, see appendix \ref{ss:introC}:
  suppose such bound $B(x) \le \K(x)$ exists. Then on input~$n$, one can search for a string~$x_n$ 
  with $n \le B(x_n)$ and hence~$n \le \K(x_n)$. But since there exists an algorithm to compute~$x_n$ given~$n$, 
  we have $\K(x_n) \le  \O(\log n)$. 
  This is a contradiction for large~$n$. Hence, no such $B$~exists.
\end{proof}

\begin{lemma}\label{lem:nid_not_uppersemicomputable}
  $\nid$ is not upper semicomputable. 
\end{lemma}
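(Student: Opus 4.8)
The plan is to contradict Lemma~\ref{lem:K_has_no_lowerbound} by turning an upper approximation of $\nid$ into an unbounded lower semicomputable lower bound on $\K$. Suppose $\nid$ is upper semicomputable, witnessed by a computable $g\colon\{0,1\}^*\times\{0,1\}^*\times\mathbb N\to\mathbb Q$ that is non-increasing in $t$ with $\lim_t g(x,y,t)=\nid(x,y)$. I would work with the diagonal function $h(x)=1/\nid(x,x)$. The point is that $\nid(x,x)$ is, up to constant factors, the reciprocal of $\K(x)$: its denominator is $\K(x)$, and its numerator equals $\K(x\mid x)$, which the identity program bounds by $\O(1)$.

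First I would check that $h$ is a genuine lower bound on $\K$. Since the numerator of $\nid$ is at least $1$ even when $x=y$ (see the footnote to Definition~\ref{def:nid}), we have $\nid(x,x)\ge 1/\K(x)$, hence $h(x)\le\K(x)$ for all $x$. Second, $h$ is unbounded: by the third metric property listed after Definition~\ref{def:nid} there is a constant $c$ with $\nid(x,x)\le c/\K(x)$, and since $\K$ is unbounded (for every $n$ some $x\in\{0,1\}^n$ has $\K(x)\ge n$, by the usual counting argument) we get $\inf_x\nid(x,x)=0$, i.e.\ $\sup_x h(x)=\infty$. Third, $h$ is lower semicomputable: because $g(x,x,\cdot)$ is non-increasing with positive limit $\nid(x,x)$, all of its values are positive, so $t\mapsto 1/g(x,x,t)$ is a computable, rational-valued, non-decreasing sequence converging to $1/\nid(x,x)=h(x)$. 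These three facts contradict Lemma~\ref{lem:K_has_no_lowerbound}, so $\nid$ is not upper semicomputable.

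This proof is short, so there is no real obstacle; the only things to handle with a little care are the positivity of the approximants $g(x,x,t)$ — needed so that passing to reciprocals preserves approximation from below — and the non-vanishing of the numerator of $\nid$ on the diagonal, which is exactly the content of the footnote to Definition~\ref{def:nid}. The conceptual message I would make explicit is the duality with Lemma~\ref{lem:nid_not_lowersemicomputable}: there, lower semicomputability let us search for a pair on which $\nid$ is large, whereas here upper semicomputability lets us approximate $1/\K(x)$ from below through $\nid(x,x)$, i.e.\ it ``detects'' complexity of $x$ from below, which is impossible.
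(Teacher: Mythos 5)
Your proof is correct and follows essentially the same route as the paper: it uses the two-sided estimate $1/\K(x) \le \nid(x,x) \le \O(1/\K(x))$ on the diagonal and derives a contradiction with Lemma~\ref{lem:K_has_no_lowerbound} by turning an upper approximation of $\nid(x,x)$ into an unbounded lower semicomputable lower bound on $\K$. The only difference is that you spell out the reciprocal construction and the positivity of the approximants, which the paper leaves implicit.
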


\begin{proof}
  By optimality of the prefix-free machine in the definition of $\K$, 
  we have that $\K(x \mid y) \ge 1$ for all~$x$ and~$y$.
  Thus $1 \le \K(x \mid x) \le O(1)$, and hence,
  \[
    1/\K(x) \;\le\; \nid(x,x) \;\le\; \O(1/\K(x)).
  \]
  If $\nid$ were upper semicomputable, we would obtain an unbounded lower semicomputable 
  lower bound of~$\K$, which contradicts Lemma~\ref{lem:K_has_no_lowerbound}.
\end{proof}

\section{Trivial approximations have at most logarithmic total update}

Lemma~\ref{lem:trivialApprox} follows from the following lemma for $c \le \O(1)$ 
and the upper bound $m \le O(n)$ on the upper approximations of Kolmogorov complexity.

\begin{lemma}\label{lem:upperbound}
  Assume $1 \le a_1 \le a_2 \le \dots \le a_m \le m$, $1 \le b_1 \le b_2 \le \dots \le b_m \le m$ and $a_i \le b_i + c$.
  Then, 
  \[
    \sum_{i \le m} \left| \frac{a_i}{b_i} - \frac{a_{i+1}}{b_{i+1}} \right| \;\;\le\;\; 2\ln m + \O(c^2).
  \]
\end{lemma}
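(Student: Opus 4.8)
The plan is to bound the total variation of the sequence $a_i/b_i$ by splitting each one-step difference into a "numerator part" and a "denominator part" and telescoping each separately. Write
\[
  \frac{a_i}{b_i} - \frac{a_{i+1}}{b_{i+1}}
  \;=\; \underbrace{\left(\frac{a_i}{b_i} - \frac{a_{i+1}}{b_i}\right)}_{\text{change from }a}
  \;+\; \underbrace{\left(\frac{a_{i+1}}{b_i} - \frac{a_{i+1}}{b_{i+1}}\right)}_{\text{change from }b}.
\]
Since $a_i \le a_{i+1}$ and $b_i \le b_{i+1}$, both pieces are $\le 0$, so the two families of terms do not cancel against each other in an unfavourable way; summing absolute values, I get
\[
  \sum_{i\le m}\left|\frac{a_i}{b_i}-\frac{a_{i+1}}{b_{i+1}}\right|
  \;\le\; \sum_{i\le m}\frac{a_{i+1}-a_i}{b_i}
  \;+\; \sum_{i\le m} a_{i+1}\left(\frac{1}{b_i}-\frac{1}{b_{i+1}}\right).
\]

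For the first sum I would use $b_i \ge \max(1,\dots)$ and, crucially, the coupling $a_i \le b_i + c$: when $a_{i+1}$ is large, $b_i$ is comparably large, so $1/b_i$ is small; more precisely $1/b_i \le (1+c)/\max(a_i,\dots)$-type estimates let me compare $\sum (a_{i+1}-a_i)/b_i$ to a harmonic-like sum $\sum_{k} 1/k$ over the range of values taken by $a$, giving $\ln m + O(c)$ — here I'd be a little careful about the steps where $a$ jumps by more than $1$, but a jump from value $u$ to value $v$ contributes at most $(v-u)/b_i \le (v-u)/(u-c) $, and summing these over a monotone sequence still telescopes into $\ln m + O(c)$ up to the correction near small indices where $u \le 2c$, which contributes the $O(c^2)$ (there are $O(c)$ such values, each contributing $O(1)$... actually $O(c)$, so one must be slightly more careful; at worst $O(c)$ values of $a$ lie in $[1,2c]$ and each contributes $O(1)$ to a crude bound, and a sharper look gives the stated $O(c^2)$). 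For the second sum, $a_{i+1} \le b_{i+1}+c \le b_i + (b_{i+1}-b_i) + c$, and $b_{i+1}(1/b_i - 1/b_{i+1}) = (b_{i+1}-b_i)/b_i$ while $b_i(1/b_i-1/b_{i+1}) = (b_{i+1}-b_i)/b_{i+1} \le (b_{i+1}-b_i)/b_i$; so this sum is bounded by $\sum (b_{i+1}-b_i)/b_i + c\sum(1/b_i - 1/b_{i+1})$, the first being the harmonic-type sum $\le \ln m + O(1)$ over the range of $b$ (again with the small-index correction absorbed into $O(c^2)$ or $O(1)$), and the second telescoping to $\le c$.

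Adding the two contributions yields $2\ln m + O(c^2)$, as claimed. The main obstacle I anticipate is the bookkeeping around indices where $a_i$ or $b_i$ is small (comparable to $c$): there the denominators are not large enough to make the harmonic estimate work termwise, and one has to argue that only $O(c)$ such values occur and bound their total contribution by $O(c^2)$ rather than the naive $O(c\log c)$ or $O(c)$. A clean way to handle this is to treat the sum over the "small regime" $b_i \le 2c$ separately — there $a_i/b_i - a_{i+1}/b_{i+1}$ has absolute value $\le a_i/b_i \le 1 + c/b_i$, but a better bound comes from noting the partial sums over any monotone block are bounded by the first minus last term plus a few corrections — ultimately giving the $O(c^2)$ slack the statement allows. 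Everything else is routine telescoping plus the standard bound $\sum_{k=1}^{m} 1/k \le \ln m + 1$.
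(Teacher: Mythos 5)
Your decomposition through the intermediate point $(a_{i+1},b_i)$ is the same basic idea as the paper's (split $\mathrm{d}(\alpha/\beta)$ into a numerator part and a denominator part), but as written the argument has a genuine gap: the two harmonic-type estimates you rely on are false. First, a small sign slip: the piece $a_{i+1}\bigl(\tfrac{1}{b_i}-\tfrac{1}{b_{i+1}}\bigr)$ is $\ge 0$, not $\le 0$ (the two pieces have opposite signs); the displayed inequality still holds by the triangle inequality, so this alone is harmless. The real problem is that both of your sums put the \emph{left} endpoint in the denominator, and left-endpoint Riemann sums of $1/u$ overestimate the integral — unboundedly so when the sequence jumps. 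Concretely, take $c=0$, $a_1=b_1=1$ and $a_i=b_i=m$ for all $i\ge 2$: the quantity to be bounded is $0$, but your intermediate bound equals
\[
\frac{a_{2}-a_{1}}{b_{1}}+a_{2}\Bigl(\frac{1}{b_{1}}-\frac{1}{b_{2}}\Bigr)=2m-2,
\]
so no amount of bookkeeping about ``$O(c)$ small values'' can recover $2\ln m+O(c^2)$ from it. The claims ``$\sum (a_{i+1}-a_i)/b_i$ telescopes into $\ln m+O(c)$'' and ``$\sum (b_{i+1}-b_i)/b_i\le \ln m+O(1)$'' are exactly where this fails: the bound $\tfrac{v-u}{u-c}$ for a jump from $u$ to $v$ is not comparable to $\ln\tfrac{v-c}{u-c}$ when $v-u$ is large relative to $u-c$.

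The fix is to make the denominator in the numerator-change term the \emph{larger} endpoint, i.e.\ route through $(a_i,b_{i+1})$ instead of $(a_{i+1},b_i)$:
\[
\frac{a_{i+1}}{b_{i+1}}-\frac{a_i}{b_i}=\frac{a_{i+1}-a_i}{b_{i+1}}-a_i\Bigl(\frac{1}{b_i}-\frac{1}{b_{i+1}}\Bigr).
\]
Then $b_{i+1}\ge\max\{1,a_{i+1}-c\}$ makes $\sum\frac{a_{i+1}-a_i}{b_{i+1}}$ a right-endpoint sum, which is genuinely dominated by $\int\frac{\mathrm{d}u}{u}$ plus an $O(c)$ correction near small values, and similarly $\sum a_i\bigl(\tfrac{1}{b_i}-\tfrac{1}{b_{i+1}}\bigr)\le\sum\frac{b_{i+1}-b_i}{b_{i+1}}+c\sum\bigl(\tfrac{1}{b_i}-\tfrac{1}{b_{i+1}}\bigr)\le\ln m+O(c)$; this yields $2\ln m+O(c)$, even slightly better than the stated $O(c^2)$. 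The paper achieves the same effect by passing to a monotone (e.g.\ piecewise-linear) interpolation $(\alpha(t),\beta(t))$ along which $\alpha\le\beta+c$ holds pointwise, bounding the total variation by $\int\frac{\mathrm{d}\alpha}{\beta}+\int\frac{\alpha}{\beta^2}\mathrm{d}\beta$ and then substituting the extremal values $\beta\ge\max\{1,\alpha-c\}$ and $\alpha\le\min\{m,\beta+c\}$; either repair is fine, but the step where the denominator tracks the current (not lagging) value is exactly the point your write-up misses.
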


\begin{proof}
 We first assume $c=0$. We prove a continuous variant. 
 Let $\alpha,\beta \colon [0,m] \rightarrow [1,m]$ be non-decreasing real functions with $\alpha(t) \le \beta(t)$ and $1 \le \alpha(0) \le \beta(m) \le m$. 
 The sum in the lemma can be seen as a special case of
 \[
 \int_{t=0}^{t=m} \left|\textnormal{d}\frac{\alpha(t)}{\beta(t)}\right| \;\; = \;\; \int \frac{\textnormal{d}\alpha(t)}{\beta(t)} \;\;+\;\; \int \frac{\alpha(t)}{\beta^2(t)}\textnormal{d}\beta(t).
 \]
 The left integral in the sum is maximized by setting $\beta(t)$ equal to its minimal possible value, which is $\alpha(t)$.
 The right one is maximized for the maximal value of $\alpha(t)$, which is~$\beta(t)$.
 Thus,
 \[
 \le \;\;\int_{u=\alpha(0)}^{u=\alpha(m)} \frac{\textnormal{d}u}{u} \;\; + \;\;
 \int_{u=\beta(0)}^{u=\beta(m)} \frac{\textnormal{d}u}{u} \;\;\le\;\; 2\ln m.
 \]
  For $c \ge 0$, the minimal value of $\beta$ is $\max \{1,\alpha - c\}$ and the maximal value of $\alpha$ is $\min \{m, \beta + c\}$.
  The result follows after a calculation.
\end{proof}

%

\section{Oscillations of $0$-approximations, the game}

For technical reasons, we first consider the {\em plain length conditional} variant of the normalized information distance $\nid'$.
For notational convenience, we restrict the definition to pairs of strings of equal length.

\begin{definition*}
  For all $n$ and strings $x$ and $y$ of length $n$, let
  \[
    \nid'(x,y) \;=\; \frac{\max \left\{ \C(x\mid y), \C(y\mid x) \right\}}{\max \{\C(x\mid n), \C(y\mid n)\}}.
  \]
  If $\C(x \mid n) = 0$, let $\nid'(x,x) = 0$.
\end{definition*}

\noindent
{\em Remarks.}
\\- For $x \ne y$, the denominator is at least $1$, since at most $1$ string can have complexity zero relative to~$n$.
\\- The choice of the value of $\nid'(x,x)$ if $\C(x \mid n) = 0$ is arbitrary, and does not affect Proposition~\ref{prop:oscillationPlain} below.
\\- In the numerator, the length $n$ is already included in the condition, since it equals the length of the strings.
\\- There exists a trivial approximation of $\nid'$ with at most $2n + O(1)$ oscillations. 
Indeed, consider an approximation obtained by defining $\C_t(\cdot | \cdot)$ with brute force searches among programs of length at most~$n + O(1)$. 
\\- Again, for every constant~$e$, we can construct an optimal machine and a $0$-approximation of~$N'$ for which the number of 
oscillations is at most~$n/e$. We now present a matching lower bound.

\begin{proposition}\label{prop:oscillationPlain}
  Every $0$-approximation of $\nid'$ has at least $\Omega(n)$ oscillations.
\end{proposition}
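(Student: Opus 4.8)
The plan is to use the game technique announced in the introduction. I would formulate a game between two players, call them the \emph{Approximator} (who plays the role of a purported $0$-approximation $f$ of $\nid'$ with few oscillations) and the \emph{Adversary} (who controls the complexities $\C(\cdot\mid n)$ and $\C(\cdot\mid\cdot)$ by revealing short programs, i.e.\ by enumerating upper bounds on these complexities). The board is indexed by pairs $(x,y)$ of $n$-bit strings; at each stage the Adversary may lower the current upper bounds on $\C(x\mid y)$, $\C(y\mid x)$, $\C(x\mid n)$, $\C(y\mid n)$ subject to the usual constraints (a Kraft-type budget: at most $2^{k}$ strings can be given a program of length $\le k$, and complexities only decrease), and the Approximator must keep $f(x,y,\cdot)$ converging to the ratio these bounds define. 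A winning strategy for the Adversary forces, on some pair of $n$-bit strings, at least $\Omega(n)$ alternations between ``the ratio is close to $1$'' and ``the ratio is close to $0$'' (or to some intermediate value), each alternation costing the Approximator one oscillation. I would then invoke the standard transfer lemma: a computable winning strategy for the Adversary in this game, combined with the optimality of $\C$, yields the lower bound on oscillations for every genuine $0$-approximation, because a genuine approximation would have to ``play against'' the Adversary's strategy and lose.

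The heart of the construction is the Adversary's strategy. I would build it around a small gadget: on a carefully chosen short string $x$ (with $\C(x\mid n)$ driven down close to $\log n$ or to some value $\ell$) and a companion string $y$, the Adversary alternately (i) makes $\C(x\mid y)$ tiny so the numerator collapses and the ratio drops near $0$, then (ii) releases a fresh $y'$ of near-maximal conditional complexity so the numerator jumps back up to roughly the denominator and the ratio returns near $1$. Each such swing is cheap for the Adversary in terms of the Kraft budget — giving one string a program of a given length costs a geometric amount of weight — so the Adversary can afford $\Omega(n)$ swings if the relevant complexities live at scale $\Theta(n)$; crucially the denominator stays pinned (it only ever decreases, and we keep it in a controlled window), so the ratio really does oscillate by a constant amount each time, and each genuine oscillation of the sequence $f(x,y,1),f(x,y,2),\dots$ is consumed. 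One then amplifies: run many such gadgets in parallel on disjoint blocks of coordinates, or nest them, so that the total number of forced oscillations on a single pair reaches $\Omega(n)$ rather than $\Omega(1)$; the $n/e$ upper-bound discussion in the preceding remarks shows this linear rate is the right target.

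I expect the main obstacle to be the bookkeeping that makes the swings genuinely ``count'' as oscillations of the \emph{same} pair $(x,y)$ while keeping the Adversary within its complexity budget. There is a tension: to get many oscillations on one fixed pair, the denominator $\max\{\C(x\mid n),\C(y\mid n)\}$ must stay in a narrow band across all stages, yet the numerator must be pushed up and down across a constant-fraction gap many times, and pushing the numerator back \emph{up} requires spending budget to certify high conditional complexity of something the ratio sees — which is delicate because complexities cannot increase, so ``pushing up'' really means having pre-committed many candidate strings and revealing their low complexity only selectively. Getting the constants so that $(1-2\varepsilon)$-type gaps survive, and so that the Kraft weight summed over all $\Omega(n)$ moves and all parallel gadgets stays below $1$, is the calculation I would be most careful with; everything else (formalizing the game, the monotonicity constraints, and the transfer lemma from game to statement) is routine and parallels \cite{compcomp,KolmogorovGames}.
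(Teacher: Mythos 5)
There is a genuine gap, and it sits exactly at the heart of your gadget. For a fixed pair $(x,y)$, both the numerator $\max\{\C(x\mid y),\C(y\mid x)\}$ and the denominator $\max\{\C(x\mid n),\C(y\mid n)\}$ are only ever revealed \emph{downwards}: the Adversary can lower current upper bounds but can never push a complexity back up. So your plan to keep the denominator ``pinned in a controlled window'' while the numerator ``jumps back up'' cannot be executed: with a pinned denominator and a non-increasing numerator, the ratio seen on the pair $(x,y)$ is monotone non-increasing and forces no oscillation at all. Your escape hatch --- releasing a fresh companion $y'$ of high conditional complexity --- changes the pair, and oscillations are counted per pair, so it does not accumulate oscillations on any single $(x,y)$; running gadgets ``in parallel on disjoint blocks'' has the same defect, and ``nesting them'' is precisely the part that needs an actual mechanism. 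The mechanism the paper uses is the opposite of yours: to drive the ratio back up, Alice \emph{lowers the denominator} (drops $\C(u\mid n)$ for all $u$ in the current surviving sets by a constant amount $e=\Theta(c)$), exploiting that Bob's row budget lets him pre-empt this on only a small fraction of strings; then she lowers the numerator again, and recurses. Since the denominator only needs to fall by a constant per swing, this yields $\Omega(n)$ swings on strings of length $n$, which is where the linear count comes from (see the game of Section~4, the transfer Lemma~\ref{lem:zeroApproximationGivesWinning}, and the strategy of Lemma~\ref{lem:stratG}).

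The second missing ingredient is how to make all swings land on one and the same pair when the decrement per level is only a constant. With constant-size decrements the surviving product set can shrink by only a constant factor $2^{-O(c)}$ per level, and Bob can spoil the required inequalities on a positive fraction of the surviving pairs at every level; one can no longer select, as in the logarithmic-update argument, a sub-rectangle on which \emph{every} pair behaved well. The paper resolves this with an averaging argument: it tracks the expected number of oscillations (plus indicator correction terms) over a product set $U'\times V'$, shows via the row restrictions that each recursive call increases this average by at least a fixed constant (the Claim and inequality (*) in the proof of Lemma~\ref{lem:stratG}), and only at the end concludes that some single pair has $\Omega(n)$ oscillations. Your Kraft-budget intuition is the right flavor for why Bob cannot cheat often, but without reversing the direction of the ``up'' move and without this averaging/selection step, the proposal does not yield more than $O(1)$ oscillations on any fixed pair.
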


\noindent
In this section, we show that the proposition is equivalent to the existence of a
winning strategy for a player in a combinatorial (full information) game.  
In the last section of the paper, we present such a winning strategy.  

\medskip
\noindent
{\em Description of game $\mathcal G_{n,c,k}$.} 
The game has $3$ integer parameters: $n \ge 1$, $c \ge 1$ and $k \ge 0$. 
It is played on two 2-dimensional grids $\mathrm X$ and $\mathrm Z$. 
Grid $\mathrm X$ has size $n \times 2^n$. Its rows are indexed by integers $\{0,1,\ldots,n-1\}$,
and its columns are indexed by $n$-bit strings. Let $\mathrm X_u$ be the column indexed by the string $u$.
See figure~\ref{fig:gridX} for an example with $n=3$. 
Grid $\mathrm Z$ has size $n \times {2^n + 1 \choose 2}$. 
The rows are indexed by integers $\{0,\ldots,n-1\}$,
and its columns are indexed by unordered pairs $\{u,v\}$, where $u$ and $v$ are $n$-bit strings, 
(that may be equal).\footnote{
  Formally, we associate sets $\{u,v\}$ with 2 elements to an unordered pair $(u,v)$, and 
  singleton sets $\{u\}$ to the pair $(u,u)$.
}
We sometimes denote unordered pairs $\{u,v\}$ of $n$-bit strings as $uv$, and write $\mathrm Z_{\{u,v\}} = \mathrm Z_{uv}$. 
Note that $\mathrm Z_{uv} = \mathrm Z_{vu}$.
Let $u \in \{0,1\}^n$. The {\em slice} $\mathrm Z_{u}$ of~$\mathrm Z$ is the 2-dimensional grid of size $n \times 2^n$ containing all 
 columns $\mathrm Z_{uv}$ with $v \in \{0,1\}^n$.
Additionally, Bob must generate a function $f$ mapping unordered pairs of $n$-bit strings and natural numbers to real numbers.

\begin{figure}[h]
  \centering
  \begin{tikzpicture}
    \board{8}{3}

    \node[anchor=north] at (0.5, 0) {\footnotesize{000}};
    \node[anchor=north] at (1.5, 0) {\footnotesize{001}};
    \node[anchor=north] at (2.5, 0) {\footnotesize{010}};
    \node[anchor=north] at (3.5, 0) {\footnotesize{011}};
    \node[anchor=north] at (4.5, 0) {\footnotesize{\dots}};

    \wpawn{0}{2}
    \wpawn{1}{2}
    \bpawn{1}{0}
    \bpawn{0}{2}
    \wpawn{3}{2}
  \end{tikzpicture}
  \caption{\small{Example of board $\mathrm X$ with $n=3$. Alice has placed 2 tokens in row 2 (white), and Bob has placed 1 token in row 0 and 1 in row 2 (black).
  The row restrictions for both players are satisfied, since $\max\{1,3\} \le 2^2$ and $1 \le 2^0$. $X_{000} = X_{011} = 2$, $X_{001} = 0$ and $X_{010}=3$.}}
  \label{fig:gridX}
\end{figure}
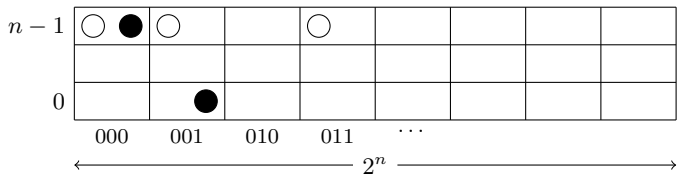

Two players, Alice and Bob, alternate turns.  
The rounds are numbered as $t = 1, 2, \ldots$
At each round, Alice plays first. At her turn, she places tokens on cells of the grids. 
She must place at least 1 token. 
Afterwards, Bob places zero or more tokens on the grids, and 
he declares all values $f(uv,t)$ for all unordered pairs $\{u,v\}$,
where $t$ is the number of the current round. 
This terminates round $t$, and the players start with round $t+1$.

For each player, for each $i \in \{1,\ldots, n\}$, and for all grids $\mathrm G \in \{\mathrm X \} \cup \{\mathrm Z_{u} : u \in \{0,1\}^n\}$, 
the following  {\em row restriction}  should be satisfied:
{\em The total number of tokens that the player has placed during the whole game in the $i$-th row of $\mathrm G$, is at most~$2^i$.}
If a player does not satisfy this restriction, the game terminates and the other player {\em wins}. See figure~\ref{fig:gridX}.
Bob's moves should satisfy 2 additional requirements. If after his turn 
these requirements are not satisfied, the game terminates and Alice wins.
\begin{itemize}[leftmargin=*]
  \item 
    Let $X_u$ be the value of column $\mathrm X_u$ given by the minimal row-index of a cell in $\mathrm X_u$ containing a token.
    If $\mathrm X_u$ contains no tokens, then $X_u = n$.
    Similar for the value~$Z_{uv}$ of column~$\mathrm Z_{uv}$.
    For all $u$ and $v$:
    \begin{equation}\tag{\texttt{c}}\label{eq:requirement_c}
      \frac{Z_{uv}-1}{\max \{X_u, X_v\}+c} \;<\; f(uv,t) \;<\; \frac{Z_{uv}+c}{\max \{X_u, X_v\}}.
   \end{equation}
    
  \item 
    For all $u$ and $v$: 
    $f(uv,1), f(uv,2), \ldots$ has at most $k$ oscillations.
    \hfill \eqnumk\!\!
\end{itemize}
\noindent
Note that for decreasing $c$ and $k$, it becomes easier for Alice to win. 

\medskip
\noindent
{\em Discussion.} 
If Alice places a token in a row with 
small index, Bob has a dilemma: either he can change the function~$f$, 
or he can place tokens on the other board to restore the ratios in~\eqref{eq:requirement_c}.
In the first case, he might increase the number of oscillations in~\eqnumk\!\!, while 
in the second case, he exhausts his limited capacity to place 
tokens on rows of small indices, (by the row restriction, at most $1 + 2^1 + \ldots + 2^{i-1} = 2^i - 1$ 
tokens can be placed below row~$i$ in each grid~$G$).

\medskip
\noindent
{\em Remark.} 
The game has at most $\O(n 2^{2n})$ rounds, because 
in each round, Alice must place at least 1 token, and by the row restriction, Alice 
can place at most $\O(n 2^{2n})$ tokens on all grids.
Hence, the game above is finite and has full information.
This implies that either Alice or Bob has a winning strategy. 


\begin{lemma}\label{lem:zeroApproximationGivesWinning}
  Let $k \colon \mathbb N \times \mathbb N \rightarrow \mathbb Z$ be such 
  that Alice has a winning strategy in the game~$\mathcal G_{n,c,k(n,c)}$.
  Then for every 0-approximation of $\nid'$ there exists a constant $c$ such that 
  for large $n$, the 0-approximation has more than $k(n,c)$ oscillations on $n$-bit inputs.
\end{lemma}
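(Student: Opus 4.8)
The plan is to prove the contrapositive: given a 0-approximation $g$ of $\nid'$ that has at most $k(n,c)$ oscillations on $n$-bit inputs for every constant $c$ and all large $n$, I will build a winning strategy for Bob in $\mathcal G_{n,c,k(n,c)}$ for a suitable $c$. The core idea is that Alice's moves on the grids $\mathrm X$ and $\mathrm Z$ mimic the stagewise behaviour of upper approximations of $\C(\cdot\mid n)$ and $\C(\cdot\mid\cdot)$, while Bob plays the ``honest'' side: he interprets the token positions as complexity bounds and uses the approximation $g$ to answer with $f$. Concretely, I would first fix the dictionary: a token placed by Alice in row $i$ of column $\mathrm X_u$ is read as a claim ``$\C(u\mid n)\le i$'', and the row restriction ``at most $2^i$ tokens in row $i$'' is exactly the counting bound that makes such claims simultaneously realizable by a single machine (there are at most $2^i$ strings of complexity $\le i$). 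Likewise for $\mathrm Z_{uv}$ and $\C$ of a pair. So $X_u$ is Alice's current upper bound on $\C(u\mid n)$ and $Z_{uv}$ her current bound on $\max\{\C(u\mid v),\C(v\mid u)\}$.

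The second step is the construction of the machine and the invocation of the recursion theorem. Bob's strategy is not defined in a vacuum: he needs an optimal machine $U$ relative to which $\nid'$ is computed, and he needs that machine to ``believe'' Alice's tokens. So I would describe a machine $M$ that enumerates, during the game, short programs for each $u$ (resp.\ for each pair) as soon as Alice drops a token low enough; by the row restrictions this enumeration never overflows the available code space, so $M$ is a legal prefix/plain machine, and by optimality $\C(u\mid n)\le X_u+O(1)$ and similarly for pairs, with the $O(1)$ the constant $c$ in the game. Crucially, the game is played simultaneously for all $n$ and the dependence of the strategy on $g$ is uniform, so one application of the recursion theorem gives a single additive constant $c$; this $c$ is what the statement quantifies over. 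Then Bob's rule is simply: at round $t$, after Alice moves, let $M$ process the new tokens; Bob places the mirror tokens on his side (matching the complexity upper bounds that $M$ has now certified — this is where he spends his own row budget, and I must check it never exceeds $2^i$ in row $i$, which follows because $M$'s code space for level $i$ has size $2^i$), and he declares $f(uv,t)=g(uv,s_t)$ for a stage $s_t$ large enough that $g$ has stabilized on all relevant inputs up to the precision forced by the current token configuration.

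The third step is verifying Bob's two extra requirements. Requirement~\eqnumk\ is immediate: $f(uv,\cdot)$ is literally a subsequence-in-time of $g(uv,\cdot)$ (up to the finitely many reindexings $s_t$), so it has at most $k(n,c)$ oscillations because $g$ does. Requirement~\eqref{eq:requirement_c} is the substantive one: I need
\[
  \frac{Z_{uv}-1}{\max\{X_u,X_v\}+c}\;<\;g(uv,s_t)\;<\;\frac{Z_{uv}+c}{\max\{X_u,X_v\}}.
\]
Since $g$ is a 0-approximation, $g(uv,s_t)\to\nid'(u,v)=\frac{\max\{\C(u\mid v),\C(v\mid u)\}}{\max\{\C(u\mid n),\C(v\mid n)\}}$, and by the machine construction the true complexities satisfy $\C\text{-numerator}\le Z_{uv}+c$ and $\C\text{-denominator}\le\max\{X_u,X_v\}+c$. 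For the matching lower bounds on the true complexities — which is what pins $\nid'$ into the interval from below and above — I would ensure that $M$ certifies tokens only when forced and argue that Alice cannot gain by placing a token without the corresponding complexity really being that small (here the fact that Bob mirrors and the recursion-theorem self-reference make the bounds two-sided, up to $c$). Choosing $s_t$ large enough that $|g(uv,s_t)-\nid'(u,v)|$ is smaller than the slack between the two displayed fractions (a slack of order $1/\max\{X_u,X_v\}$, which is positive and bounded below by the discreteness of the $X$'s) closes the verification.

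I expect the main obstacle to be the bookkeeping of the self-reference: making precise that a single constant $c$ works for all $n$ simultaneously, that $M$'s enumeration respects the prefix-freeness/counting constraints at every level and every round without knowing Alice's future moves, and that the ``honest'' lower bounds on $\C$ really do hold so that both inequalities in~\eqref{eq:requirement_c} are available — essentially, showing that Bob's mirrored configuration stays a faithful picture of the true complexities throughout, despite $g$ being only a limit. The rest is routine once the dictionary between tokens, row restrictions, and machine code space is set up cleanly.
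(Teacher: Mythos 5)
Your overall plan coincides with the paper's: Bob's moves are manufactured from the given $0$-approximation, Alice's tokens are turned into genuine complexity upper bounds by a machine that simulates the (computable) run of the game, with the row restriction guaranteeing enough programs, and the oscillations forced on $f$ transfer to the approximation because $f(uv,\cdot)$ is a subsequence of it; your contrapositive phrasing is logically the same as playing Bob against Alice's winning strategy and asking which requirement fails. One remark on machinery: the recursion theorem is not needed. The apparent circularity in the constant is resolved simply by giving $c$ and $n$ as inputs to the simulating machine, which costs only $O(\log c)$ and is absorbed by choosing $c$ large; the constant may (and does) depend on the given approximation.

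The genuine gap is in your verification of requirement \eqref{eq:requirement_c}. You propose to get the two-sidedness from ``matching lower bounds on the true complexities'', by having $M$ ``certify tokens only when forced'' and arguing that Alice cannot place a token unless the complexity really is that small. No such lower bounds exist, and they are not what is needed: the problem is not Alice playing too low, but Alice (and your Bob) not playing low \emph{enough}. For instance, if $\C(u\mid v)$ and $\C(v\mid u)$ are small but Alice never touches the column $\mathrm Z_{uv}$ (take $u=v$ as an extreme case), then a Bob who only mirrors the bounds certified from Alice's tokens leaves $Z_{uv}=n$ while $g(uv,s)\to \nid'(u,v)\approx 0$, and the left inequality of \eqref{eq:requirement_c} fails; the same happens on the $\mathrm X$ side with the right inequality. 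The information that a complexity is small comes from the computable upper approximations, not from Alice, and Bob must inject it into the boards himself: in the paper, at each round Bob finds a stage $s$ and places, for \emph{every} $u$ and every pair, a token in $\mathrm X_u$ at row $\C_s(u\mid n)$ and in $\mathrm Z_{uv}$ at row $\C_s(u\lra v)+1$, which forces $X_u\le \C_s(u\mid n)$ and $Z_{uv}\le \C_s(u\lra v)+1$; the opposite inequalities $\C_s(u\mid n)<X_u+c$ and $\C_s(u\lra v)<Z_{uv}+c$ are made part of Bob's search condition, and their eventual satisfiability is exactly what the machine-$M$ argument delivers (this is the only role of Alice's tokens). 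Relatedly, your rule ``declare $f(uv,t)=g(uv,s_t)$ for a stage $s_t$ large enough that $g$ has stabilized'' is not effective --- a limit approximation comes with no modulus of convergence, and an ineffective Bob also ruins the simulation by $M$; the paper instead has Bob search for a stage $s$ at which $g(u,v,s)$ agrees with the checkable ratio $\C_s(u\lra v)/\max\{\C_s(u\mid n),\C_s(v\mid n)\}$, and \eqref{eq:requirement_c} then follows from this agreement together with the sandwich above, with no reference to the true complexities at the moment of the check.
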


\begin{proof}
  The idea of the proof is to use any limit approximation $f'$ to construct a strategy for Bob. 
  By assumption there exists some winning strategy for Alice, and we let it play against this strategy for Bob.
  Then we show that Bob satisfies the row restrictions and requirement  \eqref{eq:requirement_c}. 
  Since Alice strategy is winning, we conclude that requirement \eqnumk must be violated. 
  Our construction implies that $f$ has fewer oscillations then~$f'$, thus also $f'$ has more than~$k(n,c)$ oscillations.

  It suffices to prove the lemma for the largest function $k(n,c)$ for which Alice wins the game $\mathcal G_{n,c,k(n,c)}$. 
  This function $k$ is computable, since the game is finite, and for each value we can determine whether Alice has a winning 
  strategy by brute force searching all strategies.
  \\- Let $\C_s(\cdot| \cdot)$ represent an upper approximation of~$\C(\cdot \mid \cdot)$. 
  \\- Let $\C(u \lra v) = \max \{\C(u \mid v), \C(v \mid u) \}$ and similar for~$\C_s( u \lra v)$.
  \\- Let $f'$ be a $0$-approximation of~$\nid'$. Without loss of generality, we assume $f'(u,v,t) = f'(v,u,t)$.
  
  \bigskip
  \noindent
  For all $c$ and~$n$, we present a run of the game $\mathcal G_{n,c,k(n,c)}$. The mapping from $c$ and~$n$ 
  to a (transcript of) this run is computable.
  First, we fix a winning strategy of Alice in the game $\mathcal G_{n,c,k(n,c)}$ in a computable way.
  For example, we may brute force search all strategies and select the first winning strategy that appears.
  Let~$r_0 = 1$. Consider the game in which Alice plays this strategy, and Bob replies as follows.

  \medskip
  \noindent
  \begin{minipage}{\textwidth}
  {\em Bob's strategy.} 
  At round $t$, Bob searches for a value $s$ with $s > r_{t-1}$ such that for all $u$ and $v$:
  \begin{itemize}
    \item[(i)]   $\C_s(u \mid n) \,<\, X_u + c$\;\, and \; $C_s(u \lra v) \,<\, Z_{uv} + c$, 
    \item[(ii)]  
    $  f'(u,v,s) \,=\, \frac{\C_s(u \lra v)}{\max \{ \C_s(u \mid n), \C_s(v \mid n) \}}$.
  \end{itemize}
  If such an $s$ is found, he sets $r_t = s$ and $f(uv,t) = f'(u,v,s)$ for all $u$ and~$v$. 
  For all $u$ he places a token in column $\mathrm X_u$ at row $\C_s(u \mid n)$. 
  For all unordered pairs $\{u,v\}$, he places a token in column~$\mathrm Z_{uv}$ at row~$\C_s(u \lra v) + 1$. 
    {\em End of Bob's strategy.} 
  \end{minipage}

  \bigskip
  \noindent
  We first show that if Bob does reply, he satisfies the row restriction.
  For $\mathrm G = \mathrm X$ this holds because there are at most $2^i$ programs of length~$i$, 
  and hence, at most $2^i$ strings $u$ with $\C_s(u) = i$ for some~$s$.
  For $\mathrm G = \mathrm Z_u$, this holds
  because  $\C_s(u \lra v) = i$ implies $\C(v \mid u) \le i$, and there are less than $2^{i+1}$ such~$v$.

  Assuming that Bob plays in round $t$, requirement~\eqref{eq:requirement_c} holds.
  Indeed, after Bob's move and for $s=r_t$, condition (i) implies:
    \[
    X_u \,\le\, \C_s(u \mid n) \,<\, X_u + c 
    \qquad \text{and} \qquad
    Z_{uv}-1 \,\le\, \C_s(u \lra v) \,<\, Z_{uv} + c. 
  \]
  Together with (ii) and $f(t,u,v) = f'(s,u,v)$, this implies requirement~\eqref{eq:requirement_c}.

  We show that for large $c$, there always exists an $s$ such that (i) and (ii) are satisfied, and hence, Bob plays in each round.
  Since $f'$ is a $0$-approximation, requirement (ii) is true for large $s$, and this does not depend on~$c$.
  We show that (i) is also satisfied. To prove the left inequality, we first construct a Turing machine~$M$. 
  The idea is that the machine plays the game above, and each time Alice places a token in a cell of column~$\mathrm{X}_u$ with row index~$i$, 
  it selects an unassigned $i$-bit string, and assigns to it the output~$u$.  Thus on input a string $p$ and integers $c,n$, 
  it plays the game, waits until the $p$-th token is placed in the row with index equal to the length of~$p$, and it outputs the column's index, 
  (which is an $n$-bit string).
  The row restriction implies that enough programs are available for all tokens.
  Hence, $\C_{M}(u \mid n,c) \le i$, whenever Alice places a token in $\mathrm X_u$ at height~$i$. 
  By optimality of the Turing machine in $\C(\cdot \mid \cdot)$, 
  we have $\C(u \mid n,c) \le X_u + O(1)$ for all $u$, and hence,  
  \[
    \C(u \mid n) \;\le\; X_u + O(\log c). 
  \]
  For large $c$, this is less than $X_u + c$. 
  By a similar reasoning, we have $\C(u \lra v) < Z_{uv} + c$, because each time Alice places a token in row $i$ of column $\mathrm Z_{uv}$, 
  we assign 2 programs of length $i$: one that outputs $u$ on input $v,n,c$, and one that outputs $v$ on input $u,n,c$.
  Thus, for large $s$, also requirement (i) is satisfied, and Bob indeed plays at any given round, 
  assuming he played in all previous rounds.

  \smallskip
  Recall that Alice plays a winning strategy, and that Bob satisfies the row restriction and requirement~\eqref{eq:requirement_c}.
  Hence, requirement~\eqnumk must be violated, i.e., for some pair $(u,v)$, the
  sequence $f(uv,1), f(uv,2), \ldots$ has more than $k(n)$ oscillations.
  Since $r_t$ is increasing in~$t$, this sequence is a subsequence of $f'(u,v,1), f'(u,v,2), \ldots$, and the latter must also have more than~$k(n)$ oscillations.
  This implies the lemma.
\end{proof}

\medskip
\noindent
To prove Theorem~\ref{th:oscillations} we need a version of the previous lemma for the prefix distance.

\begin{lemma}\label{lem:zeroApproximationGivesWinningVar}
  Under the assumption of Lemma~\ref{lem:zeroApproximationGivesWinning},
  every 0-approximation of $\nid$ 
  has more than $k(n,5\log n)$ oscillations on $n$-bit inputs for large~$n$.
\end{lemma}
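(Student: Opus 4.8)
The plan is to reduce the prefix case to the plain case already handled in Lemma~\ref{lem:zeroApproximationGivesWinning} by replaying essentially the same construction of Bob's strategy, but now feeding it a $0$-approximation of $\nid$ (prefix version) instead of $\nid'$. The key observation is that the only place in the previous proof where the choice of complexity measure mattered was in verifying condition~(i), i.e.\ that $\C(u\mid n)\le X_u+c$ and $\C(u\lra v)< Z_{uv}+c$ for the value of $c$ used. For plain complexity we got $\C(u\mid n,c)\le X_u+O(1)$ from the row restriction ($2^i$ programs of length $i$), and then absorbed the $O(\log c)$ cost of passing the parameters $n,c$ into the slack $c$. For prefix complexity the analogous machine $M$ still works, but now the row restriction only guarantees that we can injectively assign \emph{program lengths} $\le i$ to the at most $2^i$ tokens Alice places in row $i$; to turn these into prefix-free programs we pay the usual $+O(\log i)=O(\log n)$ overhead for a self-delimiting encoding of the length. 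Hence we obtain $\K(u\mid n)\le X_u+O(\log n)$ and $\K(u\lra v)< Z_{uv}+O(\log n)$, which forces us to take $c$ of order $\log n$ rather than constant; the statement fixes the constant in front, taking $c=5\log n$.

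First I would restate Bob's strategy verbatim, replacing every occurrence of $\C$ by $\K$ and every $\C_s$ by an upper approximation $\K_s$ of prefix conditional complexity, and keeping (i) and (ii) unchanged except that $\max\{\cdot,\cdot\}$ in the denominator of (ii) is now over $\K_s(u\mid n),\K_s(v\mid n)$. Second, I would re-examine the row restriction: for $\mathrm G=\mathrm X$ it holds because there are at most $2^i$ prefix-free programs of length $i$ (in fact the Kraft inequality gives at most $2^i$ programs of length exactly $i$); for $\mathrm G=\mathrm Z_u$, $\K_s(u\lra v)=i$ still implies $\K(v\mid u)\le i$, and the number of such $v$ is at most $2^{i+1}$ by Kraft — so the restriction is met exactly as before. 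Third, I would verify that (i) holds for $c=5\log n$: build the machine $M$ that replays the game and, each time Alice drops the $p$-th token into row $i$ of column $\mathrm X_u$, reads a self-delimiting code for $i$ followed by $p$ written in $i$ bits and outputs $u$; this is a prefix machine and certifies $\K_M(u\mid n,c)\le i+O(\log i)$, hence $\K(u\mid n)\le X_u+O(\log n)$, which is below $X_u+5\log n$ for large $n$. The symmetric construction for $\mathrm Z$ assigns, for each token Alice places in row $i$ of $\mathrm Z_{uv}$, two prefix-free programs of length $i+O(\log n)$, one outputting $u$ from $v,n,c$ and one outputting $v$ from $u,n,c$, giving $\K(u\lra v)< Z_{uv}+5\log n$. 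Finally, as in the previous lemma, requirement~\eqref{eq:requirement_c} follows from (i) and (ii) with these bounds, Bob therefore plays every round while respecting the row restrictions, so Alice's winning strategy forces requirement~\eqnumk to fail: some $(u,v)$ has $f(uv,1),f(uv,2),\ldots$ with more than $k(n,5\log n)$ oscillations, and since this is a subsequence of $f'(u,v,1),f'(u,v,2),\ldots$ (because $r_t$ is increasing), the same holds for $f'$.

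The main obstacle — really the only nontrivial point — is bookkeeping the logarithmic overhead so that it lands inside the budget $c=5\log n$ with room to spare, i.e.\ checking that the constant hidden in the $O(\log n)$ terms from the prefix encodings of row index and parameters is small enough (or, more honestly, that whatever it is, it is eventually dominated by $5\log n$ for large $n$, which is all the lemma asks). One subtlety worth a sentence is that the denominators in $\nid$ use $\K(u)$ and $\K(v)$ rather than $\K(u\mid n)$; but for $u$ of length $n$ we have $\K(u)=\K(u\mid n)+O(\log n)$, and since in the relevant regime $X_u$ is comparable to $n$ this again only perturbs ratios by $O(1/n)$, harmlessly absorbed. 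Everything else is a transcription of the previous proof.
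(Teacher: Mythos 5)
There is a genuine gap, and it sits exactly at the one point where your transcription deviates from what is needed. You keep the \emph{length-conditional} quantities $\K_s(u\mid n),\K_s(v\mid n)$ in the denominator of Bob's condition (ii) (and place the $\mathrm X$-tokens at rows $\K_s(u\mid n)$), while the function $f'$ you are given is a $0$-approximation of $\nid$, whose denominator is the \emph{unconditional} $\max\{\K(u),\K(v)\}$. Since (ii) is an exact equality, it will in general hold for no $s$ at all: for fixed $u,v$ both sides stabilize, $f'(u,v,s)$ to $\nid(u,v)$ and your ratio to $\K(u\lra v)/\max\{\K(u\mid n),\K(v\mid n)\}$, and these limits differ (by up to $\K(n)\approx\log n$ in the denominator). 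So Bob may never find a legal move and the whole reduction collapses. Your closing remark tries to dismiss this as a perturbation of size $O(1/n)$ ``since $X_u$ is comparable to $n$'', but that reasoning imports the $\sqrt n$-threshold of the game $\mathcal H$; the game $\mathcal G$ has no such threshold, and requirements \eqref{eq:requirement_c} and (ii) must hold for \emph{all} pairs, including pairs where $\K(u\mid n)=O(1)$ while $\K(u)=\Theta(\log n)$, for which the conditional and unconditional ratios differ by a constant, not by $O(\log n/n)$.

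The paper's proof avoids this by making the opposite substitution: every occurrence of $\C(\cdot\mid n)$ is replaced by the unconditional $\K(\cdot)$ (and $\C(\cdot\mid\cdot)$ by $\K(\cdot\mid\cdot)$), so that (ii) is literally the identity a $0$-approximation of $\nid$ must eventually satisfy, and the only thing to re-verify is (i). That is where the budget $c=5\log n$ is spent: the prefix-free machine certifying $\K(u)\le X_u+\ldots$ no longer gets $n$ (or the row index, or $c$) for free, so one prepends three self-delimiting codes, giving $\K(u)\le X_u+4\log n+O(\log c)<X_u+5\log n$ for large $n$ --- this is the analogue of your encoding argument, but applied to $\K(u)$ rather than $\K(u\mid n)$. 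Your proposal could be repaired along these lines (use unconditional $\K_s$ in (ii) and in the token placement, or alternatively relax (ii) to an approximate equality and redo the derivation of \eqref{eq:requirement_c} with explicit constants covering the $\K(n)$ discrepancy), but as written the strategy you give for Bob is not well defined, and the quantitative justification offered for the conditional/unconditional switch is incorrect. The remaining ingredients --- the row restriction via the $2^i$ bound on halting prefix-free programs of length $i$, the subsequence argument from the violation of requirement \texttt{k}, and the bookkeeping of the logarithmic overhead against $c=5\log n$ --- match the paper.
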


\begin{proof}
  As a warm up, we observe that
  \[
    \K(x) \;\le\; \C(x \mid n) + 4 \log n + O(1). 
  \]
  Indeed, we can convert a program on a plain machine that has access to~$n$, to a program on some prefix-free machine without access to~$n$, 
  by prepending prefix-free codes of the integers $n$ and $\C(x \mid n)$. Each such code requires $2\log n + O(1)$ bits, 
  and hence the inequality follows.

  We modify the proof above by replacing all appearances of $\C(x \mid n)$ by $\K(x)$, of $\C(x \mid y)$ by $\K(x \mid y)$, 
  and similarly for the approximations $\C_s(\cdot \mid \cdot)$.
  We also set $c = 5\log n$ and assume that $f'$ is a $0$-approximation of~$\nid$.
  In Bob's strategy, no further changes are needed.  

  The row restriction for Bob is still satisfied, 
  because the maximal number of halting programs of length~$i$ on a prefix-free machine is still at most~$2^i$.
  Requirement  \eqref{eq:requirement_c} follows in the same way from items (i) and (ii) in Bob's strategy.
  It remains to prove that for large $c$ and $s$, these conditions (i) and (ii) are satisfied.
  Item (ii) follows directly, since $f'$ is a $0$-approximation of~$\nid$.

  For item (i), we need to construct a prefix-free machine $M'$. This is done in a similar way as above, 
  by associating tokens in row $i$ to programs of length $i$, but we also need to prepend 3 prefix-free codes: for the row index, for~$n$, and for~$c$.
  This implies 
  \[
    \K(u) \le X_u + 4 \log n +  O(\log c) .
  \]
  Recall that $c = 5 \log n$. Hence, this is at most $X_u + c$ for large~$n$. 
  The lemma follows from the violation of requirement~\eqnumk in the same way as before.
\end{proof}

\section{Total update of $\varepsilon$-approximations, the game}

We adapt the game for the proof of Theorem~\ref{th:inapprox}.

\medskip
\noindent
{\em Description of game $\mathcal H_{n,\varepsilon,a}$}, where $\varepsilon > 0$ and $a \ge 0$ are real numbers. 
The game is the same as the game of the previous section, except that requirements  \eqref{eq:requirement_c} and \eqnumk are replaced by: 
\begin{itemize}[leftmargin=*]
  \item 
    For all $u$ and $v$ with $\max \{X_u, X_v\} \ge \sqrt{n}$:
    \begin{equation}\tag{${\epsilon}$}\label{eq:requirement_eps}
      \left| f(u,v,t) - \frac{Z_{uv}}{\max \{X_u, X_v\}} \right| \;\;\le\;\; \varepsilon.
   \end{equation}
    
  \item 
    For all $u$ and $v$ with $\max \{X_u, X_v\} \ge \sqrt{n}$:
    \begin{equation}\tag{\texttt{a}}\label{eq:requirement_a}
      \sum_{s =1}^{t-1} |f(u,v,s) - f(u,v,s+1) | \;\;\le\;\; a.
    \end{equation}
\end{itemize}

\noindent
{\em Remarks.} 
\\- We call the sum in~\eqref{eq:requirement_a}, the {\em total update} of~$f$. Similar for the total update of an $\varepsilon$-approximation.
\\- The threshold $\sqrt{n}$ is chosen for convenience. Our proof also works with any computable threshold function that is at least super-logarithmic and at most $n^\alpha$ 
for some $\alpha < 1$.

\medskip

\begin{lemma}\label{lem:epsApproximationGivesWinning}
  Let $a\colon \mathbb N \rightarrow \mathbb R$. 
  Suppose that for large~$n$, Alice has a winning strategy in the game~$\mathcal H_{n,\varepsilon,a(n)}$.
  Fix $\varepsilon' < \varepsilon$, and an $\varepsilon'$-approximation $f'$ of either~$\nid'$ or~$\nid$.
  Then, for large~$n$, there exist $n$-bit inputs for which the total update of $f'$ exceeds~$a(n)$.
\end{lemma}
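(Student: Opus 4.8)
The plan is to mimic the proof of Lemma~\ref{lem:zeroApproximationGivesWinning}, using the $\varepsilon'$-approximation $f'$ to build a strategy for Bob in the game $\mathcal H_{n,\varepsilon,a(n)}$, letting it play against a fixed winning strategy for Alice, and then deriving a contradiction with requirement~\eqref{eq:requirement_a}. First I would fix, in a computable way, a winning strategy for Alice (by brute-force searching all strategies, as the game is finite with full information). Then I would reuse the same Bob strategy as before: at round $t$, Bob searches for a stage $s > r_{t-1}$ at which (i) the upper approximations $\C_s(u\mid n)$ and $\C_s(u\lra v)$ (or $\K_s$, for the $\nid$ variant) have dropped below $X_u + c$, $Z_{uv}+c$ for a suitable constant $c$ — here $c=O(1)$ for $\nid'$ and $c = O(\log n)$ for $\nid$ — and (ii) $f'$ at stage $s$ already equals the ratio of the current approximations. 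Setting $f(uv,t) = f'(u,v,s)$, Bob places tokens at rows $\C_s(u\mid n)$ in $\mathrm X_u$ and $\C_s(u\lra v)+1$ in $\mathrm Z_{uv}$ exactly as in Lemma~\ref{lem:zeroApproximationGivesWinning}.

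The verification that Bob can always play, that he satisfies the row restrictions, and that this forces $X_u \le \C_s(u\mid n) < X_u + c$ and $Z_{uv}-1 \le \C_s(u\lra v) < Z_{uv}+c$ carries over verbatim (for the prefix case, via the estimate $\K(u) \le X_u + 4\log n + O(\log c)$ from Lemma~\ref{lem:zeroApproximationGivesWinningVar}, taking $c = 5\log n$). What changes is how we deduce requirement~\eqref{eq:requirement_eps} and the role of the $\sqrt n$ threshold. From the sandwich bounds above and (ii), for any pair with $\max\{X_u,X_v\} \ge \sqrt n$ we get
\[
  \left| f(uv,t) - \frac{Z_{uv}}{\max\{X_u,X_v\}} \right|
  \;\le\; \left| f'(u,v,s) - \nid^{(\prime)}(u,v) \right| \;+\; \frac{O(c)}{\max\{X_u,X_v\}}
  \;\le\; \varepsilon' + \frac{O(c)}{\sqrt n},
\]
where I have written $\nid^{(\prime)}$ for whichever of $\nid',\nid$ is relevant and used that, at stage $s$, $f'$ equals the genuine ratio of the (upper) complexities while $\nid^{(\prime)}$ is the ratio of the true complexities, with numerator and denominator each within $O(c)$. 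Since $c$ is $O(1)$ or $O(\log n)$ and $\varepsilon' < \varepsilon$ is a fixed constant, for large $n$ the right-hand side is at most $\varepsilon$, so requirement~\eqref{eq:requirement_eps} holds. Because Alice plays a winning strategy and Bob meets the row restrictions and~\eqref{eq:requirement_eps}, requirement~\eqref{eq:requirement_a} must fail: for some pair $(u,v)$ with $\max\{X_u,X_v\}\ge\sqrt n$ the partial sums $\sum_{s<t}|f(uv,s)-f(uv,s+1)|$ exceed $a(n)$. Since $r_1 < r_2 < \cdots$, the sequence $f(uv,1),f(uv,2),\ldots$ is a subsequence of $f'(u,v,1),f'(u,v,2),\ldots$, and the total update only shrinks under passing to a subsequence, so the total update of $f'$ on this $n$-bit pair also exceeds $a(n)$.

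I expect the main obstacle to be the threshold bookkeeping: one has to confirm that the $\sqrt n$ guard in requirements~\eqref{eq:requirement_eps} and~\eqref{eq:requirement_a} is compatible with the additive slack $O(c)$ coming from the optimality constants and the prefix-encoding overhead — i.e. that $O(c)/\sqrt n \to 0$ while still $c$ is large enough that conditions (i) of Bob's strategy are eventually met. This is exactly why the remark after game $\mathcal H$ requires the threshold to be super-logarithmic (so it dominates $c = O(\log n)$ in the $\nid$ case) and at most $n^\alpha$ (so that the game, and hence the search for Alice's strategy, stays finite and the reduction remains effective). Everything else — the construction of the auxiliary plain/prefix-free machine $M$ (resp. $M'$) assigning programs to Alice's tokens, the row-count argument, and the "subsequence has no more total update" observation — is identical to the proofs of Lemmas~\ref{lem:zeroApproximationGivesWinning} and~\ref{lem:zeroApproximationGivesWinningVar}.
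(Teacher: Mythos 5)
There is a genuine gap, and it sits exactly at the point where this lemma differs from Lemma~\ref{lem:zeroApproximationGivesWinning}: your condition (ii) for Bob's search, ``$f'$ at stage $s$ already equals the ratio of the current approximations'', cannot be kept when $f'$ is only an $\varepsilon'$-approximation. The ratio $\C_s(u \lra v)/\max\{\C_s(u \mid n), \C_s(v \mid n)\}$ is eventually constant and equal to $\nid'(u,v)$, whereas $\lim_s f'(u,v,s)$ is some $g'(u,v)$ that may differ from $\nid'(u,v)$ by up to $\varepsilon'$; so for a typical $\varepsilon'$-approximation (e.g.\ the trivial approximation shifted by $\varepsilon'/2$) there is \emph{no} stage $s$ at which equality holds, Bob's search never terminates, and the whole reduction collapses. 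The paper's proof replaces exact equality by the relaxed condition
\[
  \Bigl| f'(u,v,s) - \tfrac{\C_s(u \lra v)}{\max \{ \C_s(u \mid n),\, \C_s(v \mid n) \}} \Bigr| \;\le\; \varepsilon'
\]
(required only for pairs above the $\sqrt{n}$ threshold), which \emph{is} satisfied for all large $s$ precisely because $f'$ is an $\varepsilon'$-approximation; requirement~\eqref{eq:requirement_eps} then follows by adding the $\varepsilon'$ slack from this condition to the $O(c)/\sqrt{n}$ error coming from (i), and this is the only place where the hypothesis $\varepsilon' < \varepsilon$ is used. Your own displayed estimate betrays the inconsistency: you bound $|f'(u,v,s) - \nid'(u,v)| \le \varepsilon'$ at the finite stage $s$, which is neither what your search condition provides (exact equality would make the $\varepsilon'$ term superfluous) nor something that holds automatically at a finite stage; it is exactly the content of the relaxed condition you omitted.

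Two smaller points. First, your ``numerator and denominator each within $O(c)$, denominator at least $\sqrt n$'' step silently uses that the ratio itself is bounded by a constant; the paper secures this by truncating $f'$ at a constant $e$ with $\nid' \le e$ (otherwise the error term is $(c+1)(1+Z_{uv}/M_{uv})/\sqrt n$, and $Z_{uv}/M_{uv}$ could a priori be as large as $\sqrt n$). Second, the reason the threshold must be at most $n^{\alpha}$ has nothing to do with finiteness of the game or effectiveness of this reduction; it is needed so that Alice's strategy in Lemma~\ref{lem:stratH} can recurse logarithmically many times. The rest of your outline (computable choice of Alice's winning strategy, Bob's token placement, the row-restriction count, the machine construction giving $\C(u \mid n) \le X_u + O(\log c)$ resp.\ $\K(u) \le X_u + 4\log n + O(\log c)$ with $c = 5\log n$, and the observation that the total update of a subsequence is no larger) does match the paper's argument.
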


\begin{proof}
  We first consider an $\varepsilon'$-approximation $f'$ of $\nid'$, and at the end of the proof we explain the modifications for~$\nid$.
  The proof has the same high-level structure as the proof of Lemma~\ref{lem:zeroApproximationGivesWinning}: 
  from $f'$ we obtain a strategy for Bob that is played against Alice's winning strategy. Then, from the violation 
  of \eqref{eq:requirement_a} we conclude that the total update of $f'$ exceeds~$a(n)$.

  \medskip
  Let $n$ be large such that Alice has a winning strategy in the game $\mathcal H_{n,\varepsilon,a(n)}$.
  We consider a run of the game where Alice plays a computably generated winning strategy and Bob's replies are as follows.

  \medskip
  \begin{samepage}
   {\em Bob's strategy.} 
  He searches for an $s > r_{t-1}$ such that for all $u$ and $v$ with $\max\{\C_s(u), \C_s(v)\} \ge \sqrt{n}$:
  \begin{itemize}
    \item[(i)]   $\C_s(u \mid n) \le X_u + c$\;\, and \; $C_s(u \lra v) \le Z_{uv} + c$,  
    \item[(ii)]  
      $  \Big| f'(u,v,s) - \frac{\C_s(u \lra v)}{\max \{ \C_s(u \mid n), \C_s(v \mid n) \}} \Big| \;\le\; \varepsilon' $,
  \end{itemize}
  If such an $s$ is found, let $r_t = s$. Bob chooses $f(uv,t) = f'(u,v,s)$ for all $u$ and $v$.
  For all $u$ he places a token in column $\mathrm X_u$ at row $\C_s(u \mid n)$. 
  For all unordered pairs $\{u,v\}$, he places a token in column~$\mathrm Z_{uv}$ at row~$\C_s(u \lra v) + 1$. 
   {\em End of Bob's strategy.} 
  \end{samepage}

   \medskip
   \noindent
  For similar reasons as above, we have that for some $c$ and for large $s$, requirements (i) and (ii) are satisfied.
  This implies that for some $c$, Bob always reacts.

  We now verify that for large $n$, requirement \eqref{eq:requirement_eps} holds.
  Recall that we need to check the inequality when the denominator is at least $\sqrt{n}$.
  After Bob's move we have again that
  \begin{equation}\tag{*}\label{eq:XvsC}
    X_u \,\le\, \C_s(u \mid n) \,<\, X_u + c 
    \qquad \text{and} \qquad
    Z_{uv}-1 \,\le\, \C_s(u \lra v) \,<\, Z_{uv} + c. 
  \end{equation}
  Since $\nid' \le e$ for some constant $e$, we may also assume that $f' \le e$, because truncating $f'$ can only decrease the number of oscillations.
  This and item (ii) imply that if $n$ is large enough such that 
  \begin{equation}\tag{**}\label{eq:threshold_n}
    (c+1)\frac{e+1}{\sqrt{n}} \; \le \; \varepsilon - \varepsilon', 
  \end{equation}
  inequality  \eqref{eq:requirement_eps} is indeed satisfied.

  Because Bob loses, requirement \eqref{eq:requirement_a} must be violated.
  Since the total update of $f$ is at least the total update of $f'$ as long as the $\sqrt{n}$-threshold is not reached, this implies 
  that every $\varepsilon'$-approximation has total update more than~$a(n)$. The statement for~$\nid'$ is proven.

  \smallskip
  The modifications for $\nid$ are similar as in the previous section. 
  Instead of choosing $c$ to be a constant, we again choose it to be~$5\log n$, 
  and for the same reasons as above, this makes~\eqref{eq:XvsC} true 
  if we replace conditional plain complexity by (conditional) prefix complexity.
  This increase from constant to logarithmic $c$ increases the minimal value of $n$ in \eqref{eq:threshold_n} only 
  by a factor~$O(\log^2 n)$. Otherwise, nothing changes in the above argument. The lemma is proven.
\end{proof}

\section{Total update of $\varepsilon$-approximations, winning strategy}

\begin{lemma}\label{lem:stratH}
  Let $\varepsilon < 1/2$. For large $n$, Alice has a winning strategy in the game~$\mathcal H_{n, \varepsilon , \rho \log n}$ for
  \[
    \rho \;=\; \frac{1-2\varepsilon}{12 \log \tfrac{10}{1-2\varepsilon}}.
  \]
\end{lemma}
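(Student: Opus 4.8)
Throughout write $\gamma = 1 - 2\varepsilon \in (0,1)$ and $\delta = \gamma/10$, so that $\log(1/\delta) = \log\frac{10}{1-2\varepsilon}$ and the target reads $\rho\log n = \frac{\gamma\log n}{12\log(1/\delta)}$. The plan is to give Alice a \emph{multi-scale} strategy. Fix geometrically spaced scales $d_0 = \lceil\sqrt n\,\rceil$ and $d_j = \lceil d_{j-1}/\delta\rceil$ for $j \ge 1$, and let $L = \big\lfloor\tfrac{\log n}{2\log(1/\delta)}\big\rfloor$, so that $d_L \le n$ and the whole admissible range $[\sqrt n, n]$ of denominators is swept. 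Alice will force a single \emph{tracked} pair $\{u^\star,v^\star\}$ to have its ratio in~\eqref{eq:requirement_eps} swing between $\approx 1$ and $\approx\delta$ at $\Omega(L)$ of these scales; each swing forces $f(u^\star,v^\star,\cdot)$ to move by at least $\gamma - \delta$, so its total update in~\eqref{eq:requirement_a} reaches $(\gamma-\delta)\cdot\Omega(L) = \Omega\!\big(\tfrac{\gamma\log n}{\log(1/\delta)}\big) > a = \rho\log n$, and Bob loses. The explicit constant $\tfrac{1}{12}$, with $\delta = \gamma/10$, is what survives the constant losses hidden in the $\Omega(\cdot)$'s.

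The basic maneuver rests on two features of the game: in the empty starting position all $X_u = Z_{uv} = n$, so every ratio in~\eqref{eq:requirement_eps} equals $1$ and $f(u,v,1)\ge 1-\varepsilon$ for all pairs; and a single token in a column $\mathrm X_u$ changes the denominator of \emph{every} pair $\{u,w\}$ whose maximum is attained at $u$. At scale $d_j$ Alice keeps $u^\star,v^\star$ at $X = d_j$ with $Z_{u^\star v^\star} = d_j$ (tracked ratio $1$), and surrounds $u^\star$ (and symmetrically $v^\star$) with a large pool $W_j$ of ``satellite'' strings held at $X_w = \lceil\sqrt n\,\rceil$ with $Z_{u^\star w} = d_j$, so each satellite pair also sits at ratio $1$ with its $f\ge 1-\varepsilon$. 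A \emph{down-move} lowers $Z_{u^\star v^\star}$ to $\lceil\delta d_j\rceil = d_{j-1}$, dropping the tracked ratio to $\approx\delta$; an \emph{up-move} lowers $X_{u^\star}$ and $X_{v^\star}$ to $d_{j-1}$, restoring the tracked ratio to $\approx 1$ but inflating every satellite ratio to $\approx 1/\delta$. When a pair is thus disturbed, Bob's only options that cost no $f$-update are to lower that pair's $\mathrm Z$-value by a factor $\delta$ (a token in a row of the slice $\mathrm Z_u$ of size $\le 2^{\lceil\delta d_j\rceil}$) or to push one of its $X$-values below $\sqrt n$ to make it dormant (a token in a row of $\mathrm X$ of size $<2^{\sqrt n}$); anything else is an update of at least $\gamma-\delta$, and of at least $1/\delta - 1 \gg \gamma$ on a satellite. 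Since $|W_j|$ may be taken as large as $\approx 2^{d_j}$, which dwarfs $2^{d_{j-1}}$ and hence Bob's total remaining capacity for those cheap responses (the tokens he spends on them are never recovered), Alice can force at least one $\ge\gamma-\delta$ update per maneuver. Alternating down-move, up-move, $\dots$ produces the claimed swings of $f(u^\star,v^\star,\cdot)$, and the fact that $X_{u^\star}$ can only decrease from $n$ to $\sqrt n$ is exactly what allows $L$ up-moves.

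Two issues need care, and the second is the real obstacle. First, Alice must build the pools $W_0,\dots,W_L$ and keep repositioning satellites as the scales descend without violating her own row restrictions; this is feasible because the rows she uses and $\sum_j|W_j|$ are dominated by the top scales and stay far below $2^n$, and because at scale $d_{j-1}$ she may recycle the material that Bob was forced to push down to scale $d_{j-1}$ when escaping at scale $d_j$. Second, one must prevent Bob from perpetually deflecting the mandatory update away from $\{u^\star,v^\star\}$ onto ever-fresh satellites, and from cheaply neutralizing the tracked pair's up-moves by simply lowering $Z_{u^\star v^\star}$. The resolution is to organize the satellites so that these escapes are coupled: an up-move is arranged so that Bob's only $f$-free way to cancel it on $\{u^\star,v^\star\}$ simultaneously re-inflates a pool he can no longer afford to cancel, so that at $\Omega(L)$ of the scales his cheapest move really is to update $f(u^\star,v^\star,\cdot)$ itself. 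Verifying that Bob's cumulative cheap-response budget — living in rows of total size $O(2^{d_{L-1}}) = 2^{O(\gamma n)} \ll 2^n$, and, crucially, shared across all scales by irreversibility of token placements — is exhausted before the $L$ maneuvers are completed is the main calculation; it also pins down the admissible $\delta$, and the choice $\delta = \gamma/10$ yields $L \ge \tfrac{\log n}{2\log(10/\gamma)}$ and a per-scale gain $\ge \gamma - \delta = \tfrac{9}{10}\gamma$, hence total update $\ge \rho\log n$ for all large $n$.
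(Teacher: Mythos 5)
Your strategy has a genuine gap, and it sits exactly where you say ``the main calculation'' would be. For an up-move (lowering $X_{u^\star}$ to $d_{j-1}$) to inflate a satellite pair $\{u^\star,w\}$, the maximum $\max\{X_{u^\star},X_w\}$ must actually drop, so you need $X_w<d_{j-1}$ beforehand. But those satellite positions must be created by \emph{your own} tokens on grid $\mathrm X$, and your row restriction allows fewer than $2^{d_{j-1}+1}$ tokens in all rows below $d_{j-1}$ over the whole game (and only $2^{\lceil\sqrt n\rceil}$ in row $\lceil\sqrt n\rceil$, where you literally place them); so the pool that an up-move can inflate has size at most about $2^{d_{j-1}}$, not $2^{d_j}$. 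That does not dwarf Bob's cheap-response capacity --- it matches it: he may place one token near row $d_{j-1}$ of the single slice $\mathrm Z_{u^\star}$ per inflated satellite, with budget $2^{d_{j-1}}$ in that row alone. If instead you park the satellites at $X_w\approx d_j$ so that the pool can have size $2^{d_j}$, then lowering $X_{u^\star}$ leaves $\max\{X_{u^\star},X_w\}=X_w$ unchanged and nothing is inflated. So the pivotal inequality ``$|W_j|\approx 2^{d_j}$ exceeds Bob's budget'' fails either way. Moreover, even for satellites that are inflated, Bob can simply update $f$ on those pairs: requirement~\eqref{eq:requirement_a} is per pair, so a constant-size update scattered over ever-fresh satellite pairs never brings any single pair near $a=\rho\log n$; to make satellites expensive you would have to re-use the \emph{same} satellite pair $\Omega(\log n)$ times, which is the original problem again. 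Finally, Bob's deflection on the tracked pair itself (after your down-move he lowers $X_{u^\star},X_{v^\star}$ to about $d_{j-1}$, restoring the ratio to $1$ without touching $f(u^\star v^\star,\cdot)$) costs him two tokens in a row whose budget is $2^{d_{j-1}}$, i.e.\ essentially nothing; after $\approx L$ such deflections the tracked pair goes dormant having accumulated zero update. The ``coupled escapes'' that are supposed to rule this out are asserted, not constructed, and the counting above shows no single-tracked-pair version of this coupling can work.

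The paper's proof avoids this by never trying to protect one pair. Alice works inside a product set $U'\times V'$ with $|U'|=|V'|=2^{n'}$, splits $U'$ into $E=2^{n'-d}$ pairwise disjoint blocks $U_j$ of size $2^{d-1}$ where $d=\floor(\delta n')$, uses Lemma~\ref{lem:main_help} (a consequence of Bob's row restriction in the slices $\mathrm Z_u$) to choose $V_j$ so that all $Z_{uv}$ on $U_j\times V_j$ are still high, and then drops all these $Z_{uv}$ to $d$ at once. Now the pigeonhole does the work: Bob can lower $X_u$ below $n'-d$ for fewer than $E$ strings $u$, so some entire block $U_j\times V_j$ is untouched, and on \emph{every} pair of that block requirement~\eqref{eq:requirement_eps} forces an $f$-update of at least $\delta$. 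The recursion then continues inside that block with $n'\leftarrow d-1$, which is what makes the \emph{same} surviving pairs pay again at each of the $\Theta(\log n/\log(1/\delta))$ scales down to $\sqrt n$. The exponential multiplicity of simultaneously threatened pairs is the ingredient that makes Bob's row restrictions bite; it is absent from your proposal, so the argument as sketched does not go through.
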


\noindent
By Lemma~\ref{lem:epsApproximationGivesWinning}, this implies  Theorem~\ref{th:inapprox}.

\begin{proof}[Proof idea.]
  Alice's winning strategy maintains a product set $U' \times V'$ containing pairs of strings. Initially, $U'$ and $V'$ are disjoint subsets of $\{0,1\}^n$ of size $2^{n-1}$.
We force Bob to decrease $f(u,v,\cdot)$ by at least a constant $\delta>0$ for a significant fraction of pairs $(u,v) \in U' \times V'$. 
Afterwards, we discard parts of $U'$ and of $V'$ such that for all pairs of the remaining set $U' \times V'$, 
this increase and decrease indeed happened. After a 'reset'-operation, we repeat the procedure.
We show that we can repeat this logarithmically many times before the sets $U'$ and $V'$ have size less than $2^{\sqrt{n}}$. And this implies the result.

The idea to enforce a decrease is as follows. First we consider a set $S = \bigcup_{j \le E} U_j \times V_j$, 
where $U_1, \ldots, U_E$ is any collection of pairwise disjoint subsets of $U'$ of some small size. 
The first time, we choose the size to be roughly $2^{\delta n}$
for some small constant $\delta$, and the number $E$ of sets $U_j$ equals roughly $2^{(1-\delta)n}$. 
  The sets $V_1, \ldots, V_E$ also 
  have size $2^{\delta n}$ and are chosen such that $Z_{uv}$ is larger than $n(1-\delta)$ for all $(u,v) \in U_j \times V_j$. 
This is possible, since the size of $U_j$ are very small, see figure~\ref{fig:strategy}. This implies that $f(uv,t)$ is very close to $1$ in~$S$. 

\begin{figure}\label{fig:strategy}
  \centering
  \begin{tikzpicture}[scale=0.5]
  \draw (0,0) rectangle (8,8);
  \foreach \i/\j in {0/7, 1/6.5, 2/5, 3/2.5, 4/3.3, 5/1.8}{
    \filldraw[black, fill=lightgray] (\i,\j) rectangle +(1,1);
  }
  \foreach \i in {0.3,1.6,3.1,5.4}{
    \draw[dashed,darkgray] (\i,0) -- (\i,8);
    \filldraw[darkgray] (\i,0) circle (2pt);
  }
    \node[anchor=north] at (4,0){$U'$};
    \node[anchor=east] at (0,4){$V'$};
\end{tikzpicture}
  \caption{\small{The set $S$ from Alice's strategy. For 4 indices~$j$, Bob reacted by decreasing $X_u$ for some $u \in U_j$. For 2 indices this did not happen, and they might be selected by Alice to start the next iteration.}}
\end{figure}
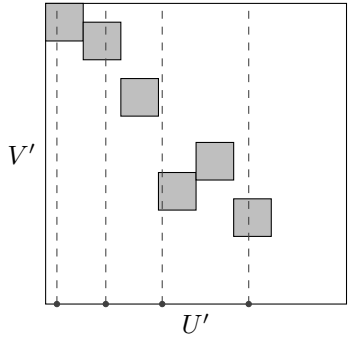

Alice now decreases $Z_{uv}$ for all $(u,v) \in S$ to $\delta n$. $\delta$ is chosen small enough such that if Bob wants to satisfy
requirement \eqref{eq:requirement_eps}, he is forced to 
  either decrease $X_{u}$ to less than $n(1-\delta)$ or to decrease $f$ by at least $\delta$. 
Since he can do the first only for a small fraction of strings~$u$, (for less than $2^{(1-\delta)n} \simeq E$ strings~$u$), 
there will be a part $U_j \times V_j$ that only contains $(u,v)$ for which the second option was chosen.
Afterwards, Alice decreases $X_u$ for all $u \in U_j \cup V_j$, and the procedure can be repeated as if the game was played for $n \leftarrow \delta n$.

In each iteration, the parameter decreases by a constant factor~$\delta$, and hence the strategy can be repeated logaritmically many times. 
  Hence, Alice can enforce a total update proportional to $\delta \cdot \log_{1/\delta} n$, and the proof overview.
\end{proof}

\noindent
We present the details.
The following technical lemma presents the set~$S$, from which a part $U_j \times V_j$ will be chosen for the recursion.

\begin{lemma}\label{lem:main_help}
  Let 
  \\- $E$ and $N$ be powers of $2$ with $E \le N/2$.
  \\- $U'$ and $V'$ be subsets in $\{0,1\}^n$ of size~$N$.
  \\- $U_1, \ldots, U_E$ be pairwise disjoint subsets of $U'$ of size~$N/(2E)$.
  \\There exist sets $V_1, \ldots, V_E$ such that
  for all $j \le E$ and $(u,v) \in U_j \times V_j$ we have $Z_{uv} \ge \log E$. 
\end{lemma}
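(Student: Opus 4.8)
The plan is a straightforward double-counting argument that uses only the row restriction on the grid $\mathrm{Z}$, so in particular it does not matter how the game has been played so far. Fix an index $j \le E$. I will call a string $v$ \emph{bad for $j$} if $Z_{uv} < \log E$ for some $u \in U_j$; by the definition of $Z_{uv}$ this means exactly that the column $\mathrm{Z}_{uv}$ contains a token in one of the rows $0, 1, \dots, \log E - 1$ (here I use that $E$ is a power of two, so that $\log E$ is a nonnegative integer). The aim is to show that at most $N - N/E$ of the $N$ strings of $V'$ are bad for $j$, so that at least $N/E \ge N/(2E)$ of them are good for $j$; I then take $V_j \subseteq V'$ to be any $N/(2E)$ good strings, and doing this separately for each $j$ produces $V_1, \dots, V_E$. (The $V_j$ need not be pairwise disjoint, and in general cannot be, since for large $E$ an adversary can force all the $U_j$ to share the same few good columns.)

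The first step is to bound, for one fixed $u \in U_j$, the number of $v$ with $Z_{uv} < \log E$. Any such $v$ indexes a column of the slice $\mathrm{Z}_u$ that carries a token in some row $i < \log E$, and the row restriction applied to the grid $\mathrm{Z}_u$ — which is one of the grids $\mathrm{G}$ to which the restriction applies — bounds the number of tokens placed by each player in row $i$ of $\mathrm{Z}_u$ by $2^i$, hence by $2^{i+1}$ for both players together. Summing over $i = 0, \dots, \log E - 1$ bounds the number of such tokens, and therefore the number of such columns, by $\sum_{i=0}^{\log E - 1} 2^{i+1} = 2E - 2$. The second step sums this over the $|U_j| = N/(2E)$ strings $u \in U_j$: the number of strings bad for $j$ is at most $\tfrac{N}{2E}(2E - 2) = N - N/E$. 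Hence at least $N/E$ strings of $V'$ are good for $j$; since $E \le N/2$ this is at least $N/(2E)$ and is a positive power of two, so a set $V_j \subseteq V'$ of exactly $N/(2E)$ good strings exists, and then $Z_{uv} \ge \log E$ for every $(u,v) \in U_j \times V_j$.

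I expect no real obstacle here — the proof idea already anticipates this counting. The only points to handle carefully are that the row restriction is stated per player (hence the factor $2$ in $2^{i+1}$), that it is genuinely applicable to each slice $\mathrm{Z}_u$, and that a token sitting in a column $\mathrm{Z}_{uv}$ with $u \ne v$ is counted in both slices $\mathrm{Z}_u$ and $\mathrm{Z}_v$, which only inflates the upper bound and so does no harm. It is worth recording that the output sizes satisfy $|V_j| = N/(2E) = |U_j|$, which is exactly what is needed for the pair $(U_j, V_j)$ to serve as the product set in the next round of Alice's recursion.
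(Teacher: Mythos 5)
Your proposal is correct and follows essentially the same counting argument as the paper: bound the number of ``bad'' $v$ per $u \in U_j$ via the row restriction on the slice $\mathrm{Z}_u$, sum over $U_j$, and pick $V_j \subseteq V'$ among the remaining good strings. In fact your bookkeeping is slightly more careful than the paper's (you account for both players' tokens, giving $2E-2$ bad $v$ per $u$ rather than the paper's ``less than $E$''), and the resulting bound $N - N/E$ still leaves at least $N/E \ge N/(2E)$ good strings, so the conclusion goes through.
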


\begin{proof}
  For each $u$, less than $E$ strings $v$ satisfy $Z_{uv} < \log E$.
  Fix some set~$U_j$. We need to select $V_j$. How many $v$ in $\{0,1\}^n$ satisfy 
  \[
   \text{ 
    $Z_{uv} < \log E$ \; for some \; $u \in U_j$\;? 
   }
  \]
  There are less than $(N/(2E)) \cdot E = N/2$ such $v$.
  Let $V_j$ be a subset of $V'$ containing $N/(2E)$ of these other strings.
\end{proof}

\begin{proof}[Proof of  Lemma~\ref{lem:stratH}.] 
  Let $\delta = (1-2\varepsilon)/5$. 
  Let $M_{uv} = \max \{X_u, X_v\}$.
  Alice will create pairs $(u,v)$ in which $Z_{uv}/M_{uv}$ oscillates between~$2\delta$ and~$1-2\delta$. 
  The distance between these values is $\delta + 2\varepsilon$, thus to satisfy 
  requirement~\eqref{eq:requirement_eps}, the sum in~\eqref{eq:requirement_a} 
  in such an oscillation increases by at least~$\delta$. 
  (In fact, each cycle contributes at least $2\delta$, but we do not optimize the constant factors.)

  Initially, let $n' = n-1$, and let $U'$ and $ V'$ be disjoint parts of $\{0,1\}^n$ of size~$2^{n-1}$.
  The strategy is recursive in~$n'$.  
  At the start of each recursive call,  we have $X_u \ge n'$ for all~$u \in U'$, and Alice will not have played on the grid $\mathrm Z$ in a row with index smaller than~$n'$. 
  In the beginning of the game, these conditions are trivially satisfied.

  \bigskip
  \begin{samepage}
  {\em Alice's recursive strategy inside disjoint sets $U'$ and $V'$ of size $2^{n'}$.} 

  If $\floor(\delta n') < \sqrt{n}$, the strategy terminates.
  If $n'< n-1$, then for all $u \in U' \cup V'$, Alice places a token in $\mathrm X_u$ at height~$n'+1$. 
  (This guarantees $M_{uv} \le n'+1$, but is not needed in the first recursive call, when $n' = n-1$, since $X_u \le n$ by definition.)
  Then it is Bob's turn.
  If he does not satisfy requirements \eqref{eq:requirement_eps} and  \eqref{eq:requirement_a}, 
  Alice wins and the game terminates. Assume the game continuous.
  
  Let $d = \floor(\delta n')$.
  Let $(U_1, V_1), \ldots, (U_E, V_E)$ be a sequence that satisfies the conditions of  Lemma~\ref{lem:main_help} with $N = 2^{n'}$ and $E = 2^{n'-d}$. 
  (These sets have size $2^{d-1}$.)
  For all $j \le E$ and all pairs $(u,v) \in U_j \times V_j$, Alice places a token in $\mathrm Z_{uv}$ at height $d$. 
  Then it is Bob's turn.
  If he does not satisfy requirements \eqref{eq:requirement_eps} and  \eqref{eq:requirement_a}, 
  the game and the strategy terminates.

  Otherwise, Alice selects some index $j \le E$ for which $X_u \ge \log E$ for all $u \in U_j$. 
  (Such $j$ exists, because there are less than $E$ strings $u$ with $X_u < \log E$, and the sets $U_1, \ldots, U_E$ are pairwise disjoint.)
  She runs the strategy recursively for $U' \leftarrow U_j$, $V' \leftarrow V_j$ and $n' \leftarrow d-1$.
  {\em End of the strategy.} 
  \end{samepage}

  \bigskip
  \noindent
  We need to prove that for some $\rho > 0$, Alice wins the game with parameter $a = \rho \log n$ for large~$n$.
  Assume $d = \floor(\delta n') \ge \sqrt{n}$.
  We show that the total update in the selected set $U_j \times V_j$ increases by at least~$\delta$.
  After Alice's first move, for all $(u,v) \in U' \times V'$, we have 
  \[
    \frac{Z_{uv}}{M_{uv}} \;\ge\; \frac{(1-\delta)n'}{n'+1} \;\ge\; 1-2\delta \,.
  \]
  If Bob's reply does not satisfy the requirements, we are finished.
  Otherwise, Alice performs her second move. 
  For all $j \le E$ and $(u,v) \in U_j \times V_j$, we have
  \[
    \frac{Z_{uv}}{M_{uv}} \;\le\; \frac{\delta n'}{\ceil((1-\delta)n')} \;\le\; 2\delta,
  \]
  where the right inequality follows from our choice of~$\delta \le 1/4$ and for large~$n'$.
  As explained above, if Bob satisfies the requirements, then the total update of $f$ increases by at least~$\delta$.
  
  We now determine a value of $\rho$ such that Alice wins the game~$\mathrm H_{n, \varepsilon , \rho \log n}$ for large~$n$.
  Except for the last, each recursive call increases the total update 
  by at least $\delta$, and the number of such calls
  is 
  \[
    r \;=\; \log_{2/\delta} \frac{\delta \cdot (n-1)}{2\sqrt{n}}. 
  \] 
    Note that the base of the logarithm is $2/\delta$, 
  because for $n \ge 16$, the assumption $d-1 = \floor(\delta n') - 1 \ge \sqrt{n}$ implies $\floor (\delta n') - 1 \ge \delta n'/2$.
  Hence Alice wins the game for $\rho$ and $n$ such that $\rho \log n \le \delta \cdot r$. 
  The lemma follows for any $\rho$ arbitrarily close to
  \[
    \frac{\delta}{2 \log \tfrac{2}{\delta}} \;=\;  \frac{(1-2\varepsilon)}{10\log \tfrac{10}{1-2\varepsilon}}.
    \qedhere
  \]
\end{proof}

\section{Oscillations of $0$-approximations, winning strategy}

\begin{lemma}\label{lem:stratG}
  There exists a constant $\gamma > 0$ such that for all~$c \ge 3$ and~$n> 16c^2$, Alice has a winning strategy in the game~$\mathcal G_{n, c, \gamma n/c}$\,. 
\end{lemma}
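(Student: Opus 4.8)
\textbf{Proof plan for Lemma~\ref{lem:stratG}.}
The plan is to mirror the recursive strategy used for Lemma~\ref{lem:stratH}, but now the objective is to force many \emph{oscillations} of a single ratio $Z_{uv}/M_{uv}$ rather than a large cumulative update. The key structural difference is that requirement~\eqref{eq:requirement_c} is a strict two-sided sandwich with an additive slack of only $c$ (and a $+1$ in the numerator), so to force Bob to change $f(uv,\cdot)$ Alice must move the ratio $Z_{uv}/M_{uv}$ by strictly more than roughly $2c/M_{uv}$. Since Alice can push $M_{uv}$ down only logarithmically many times before exhausting the row restrictions, she must work with values of $M_{uv}$ that are at least linear in $n$ during each of the $\Omega(n/c)$ many oscillations she wants to produce; this is why the target number of oscillations is $\gamma n/c$ rather than $\gamma \log n$. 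Concretely, I would keep $M_{uv}$ roughly equal to $n$ throughout (Alice never plays deep into grid $\mathrm X$), and on grid $\mathrm Z$ she alternately sets $Z_{uv}$ to a ``high'' value $\approx n - 1$ and a ``low'' value $\approx n/2$; the gap $n/2$ is $\gg 2c$ for $n > 16c^2$, so each such swing forces Bob either to change $f(uv,\cdot)$ (costing one oscillation in requirement~\eqnumk) or to move a token in $\mathrm X_u$ or $\mathrm X_v$ to a much smaller row.

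The execution would go as follows. First I set up the combinatorial core, analogous to Lemma~\ref{lem:main_help}: given a large product pool $U'\times V'$ of strings with $X_u,X_v$ still near $n$, and a partition of (part of) $U'$ into $E$ disjoint blocks $U_1,\dots,U_E$ of a controlled small size, I produce companion blocks $V_1,\dots,V_E$ inside $V'$ so that all pairs $(u,v)\in U_j\times V_j$ currently have $Z_{uv}$ large (near $n$) — the counting argument is the same as in Lemma~\ref{lem:main_help}, since for each $u$ only fewer than $2^{\log E}=E$ strings $v$ can have small $Z_{uv}$. Then one ``oscillation round'' for Alice is: (1) place tokens in every $\mathrm Z_{uv}$, $(u,v)\in\bigcup_j U_j\times V_j$, at a row of index $\approx n/2$, forcing the ratio down to $\approx 1/2$; Bob must either drop $f(uv,\cdot)$ below $\approx 1/2 + O(c/n)$ or drop $X_u$ (equivalently place a token in a low row of $\mathrm X_u$ or $\mathrm X_v$). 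The row restriction on $\mathrm X$ lets Bob drop $X_u$ below $\log E$ for fewer than $E$ strings $u$, hence for fewer than $E$ of the disjoint blocks, so at least one block $U_j$ survives with all its $X_u$ still $\ge \log E$ (and similarly on the $V$ side by choosing $E$ a bit smaller, or by a union bound over both sides). Restricting attention to that surviving block, Alice next (2) raises the ratio back up — but note she cannot ``erase'' tokens, so to re-raise $Z_{uv}/M_{uv}$ she must instead again have $M_{uv}$ near $n$; this is exactly why she must \emph{not} have let $M_{uv}$ drop, and why after step (1) she works only inside a block where no token was placed low in $\mathrm X$. To re-raise the ratio she simply does nothing to $\mathrm Z$ on the new, smaller pool and immediately re-partitions it and repeats — wait, more carefully: she needs the ratio to actually go back up, so the clean way is to keep a \emph{fresh} reservoir of columns $\mathrm Z_{uv}$ that have never been touched (these still have $Z_{uv}=n$), by always passing to $U_j\times V_j$ with $V_j$ chosen from a still-untouched part of $V'$; then within the new pool the ratio starts near $1$ again, and pushing it to $\approx 1/2$ is the next oscillation. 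Each such round consumes tokens from $\mathrm X$'s low rows (Bob's side) proportionally to $E$, and from $\mathrm Z$'s row $\approx n/2$ (Alice's side) proportionally to the pool size; tracking these budgets shows Alice can run $\Omega(n/c)$ rounds before either her $\mathrm Z$-row-$n/2$ budget or her pool size is exhausted.

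For the budgeting I would parametrize: let the pool have size $N$, split off $E$ disjoint blocks each of size $s$ with $E\cdot s \le N$, pick $E$ so that $\log E \le n - \Theta(c)$ stays true (Bob can kill at most $E-1$ blocks on each of the two sides, so take, say, $2E$ blocks and keep the pool generous), and pick $s$ so that the surviving block, of size $s$, is still large enough to recurse — but crucially here, unlike Lemma~\ref{lem:stratH}, the pool need \emph{not} shrink geometrically by a large factor each round, because we are not descending in a length parameter $n'$; we only need $N$ to stay above $2^{\sqrt n}$. So I would keep $s \ge N/4$ or so, losing only a constant factor in the pool per round, which allows $\Omega(n)$ rounds from the pool-size constraint; the binding constraint becomes the $\mathrm X$ row restriction for Bob and the $\mathrm Z$ row restrictions for Alice. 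Counting the tokens Alice places in row $\approx n/2$ of each slice $\mathrm Z_u$: across all rounds this is bounded by $2^{n/2+1}$, and each round places $\le s \le 2^{n-1}$ of them in a given slice — hmm, that is too many for a single slice, so I would instead spread the $s$ companions $v$ across distinct columns so that each slice $\mathrm Z_u$ receives only one token per round, giving a budget of $2^{n/2+1}$ rounds, comfortably $\Omega(n)$; symmetrically for Bob's $\mathrm X$ rows the key is that he can afford only $2^i$ tokens total in row $i$, and Alice forces him to spend roughly one row-$(\le\log E)$ token per killed block, $E-1$ per round, over $\Omega(n/c)$ rounds, which needs $E\cdot n/c \lesssim 2^{\log E}$, i.e.\ essentially automatic. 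The main obstacle I anticipate is the bookkeeping that simultaneously keeps (a) $M_{uv}$ pinned near $n$ (so the slack $c$ is negligible against the gap $n/2$, which forces $n>16c^2$ or so), (b) a supply of untouched $\mathrm Z$-columns for the ``up'' phase of each oscillation, and (c) the disjointness needed so that the counting ``Bob kills $<E$ blocks'' goes through on \emph{both} the $U$ and $V$ sides at once; getting all three compatible while still running $\Omega(n/c)$ rounds is the delicate part, but since we are not forced into a geometric recursion it should close with room to spare, yielding some absolute constant $\gamma>0$.
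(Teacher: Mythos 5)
There is a genuine gap, and it sits at the heart of your concrete strategy. Requirement \eqnumk counts \emph{oscillations} of the sequence $f(uv,1),f(uv,2),\ldots$ for a \emph{fixed} unordered pair, not the number of times Bob updates $f$. In your scheme Alice pushes $Z_{uv}$ down once (to $\approx n/2$) and then, for the ``up'' phase, moves to \emph{fresh, untouched} columns $\mathrm Z_{uv'}$ with new companions $v'$. But then each individual pair experiences at most one forced change: the old pair's ratio stays low forever, the new pair's $f$ just decreases once when its column is hit. Every sequence $f(uv,\cdot)$ Bob produces can be monotone non-increasing, which has at most one oscillation, so Bob satisfies \eqnumk with $k=O(1)$ and your strategy does not win. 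Since tokens are never removed and $Z_{uv}$ can only decrease, the only way to force $f(uv,\cdot)$ back \emph{up} for the same pair is to decrease the denominator $M_{uv}$, i.e.\ Alice must herself play deep into grid $\mathrm X$ on the surviving strings --- exactly what your plan forbids (``keep $M_{uv}$ pinned near $n$ throughout''). This is why the paper's strategy alternates a drop of $X_u$ (to $n'+1$) with a drop of $Z_{uv}$ (to $n'-e$) on a \emph{nested} family of product sets, so that the same pairs oscillate again and again.

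Two further points follow from this. First, your swings of size $\approx n/2$ are far too expensive: if each oscillation must shrink both $X$ and $Z$ by $\Theta(n)$ (or even geometrically), you get only $O(1)$, or at best $O(\log n)$, oscillations per pair --- that is the $\mathcal H$-game regime. The $\Omega(n/c)$ bound needs the observation you state in your first paragraph but then abandon: the slack in \eqref{eq:requirement_c} is only $c$ (plus the $+1$), so dropping $Z_{uv}$ and then $X_u$ by only $e=4c$ already forces a detectable up--down swing of $f$; hence the recursion can descend $n'\mapsto n'-4c$ and run $\Theta(n/c)$ times. Second, with such small steps the number of blocks is only $E=2^{4c}$, so the counting ``Bob kills fewer than $E$ blocks, pick a fully surviving one'' no longer works (Bob can spoil some pairs in \emph{every} block). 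The paper handles this by switching to an averaging potential: it tracks $\avg[\osc_t+\text{indicator}]$ over the current product set and shows it grows by at least $1/4$ per recursive call, using Lemma~\ref{lem:highRatioG} (choose the $V_j$ so averages do not drop) and Lemma~\ref{lem:oscillations_help} (the row restriction bounds the measure of spoiled strings by $2^{-i}$). Some such average/potential argument, together with Alice-driven decreases of $M_{uv}$ in increments $O(c)$ on a fixed nested family, is the missing machinery your proposal would need.
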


\noindent
Together with  Lemma~\ref{lem:zeroApproximationGivesWinning} this implies  Theorem~\ref{th:oscillations}.

In the winning strategy from the previous section,
we obtain a logarithmic number of oscillations. 
To increase this number, we must decrease $n'$ by a smaller amount: by a constant for the $\Omega(n)$ lower bound of $\nid'$ 
and logarithmic for the $\Omega(n/\log n)$ of $\nid$.
Again we will consider a set $S = \bigcup_{j \le E} U_j \times V_j$, but now $E$ will be very small, and it will 
no longer be possible to achieve the requirements for $X_u$ and $Z_{uv}$ for all pairs in $U_j \times V_j$ for some~$j$.
However, we can achieve that the average number of oscillations grows proportional with the number of recursive calls, and this is enough for our purposes.


For a finite set $S$, and a function $h$ on $S$, let $\avg_S [h]$ denote the expected value of $h(s)$ when $s$ is uniformly distributed over~$S$.
%
We present 2 technical and trivial lemmas that are useful for later reference.

\noindent
\begin{lemma}\label{lem:highRatioG}
  Let 
  \\- $N$ and $E\le N$ be non-negative powers of~$2$ with $E \le N/2$,
  \\- $U_1, \ldots, U_E$ be any partition of $U'$ into $E$ sets of size~$N/E$, 
  \\- $a \colon U' \times V' \rightarrow \mathbb R$.
  \\There exist subsets $V_1, \ldots, V_E$ of $V'$ of size $N/E$ such that for $S = \bigcup_{j \le E} (U_j \times V_j)$ we have
  \[ 
    \avg_{S} \big[ a \big] \;\ge\; \avg_{U' \times V'} \big[a\big].
  \] 
\end{lemma}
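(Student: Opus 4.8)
The plan is to prove Lemma~\ref{lem:highRatioG} by an averaging argument: for each $j$ we choose $V_j$ to be the $N/E$ elements of $V'$ that maximize the $U_j$-averaged value of $a$, and then show that the resulting global average over $S$ is at least the average over all of $U' \times V'$. First I would fix $j$ and define, for each $v \in V'$, the quantity $\avg_{u \in U_j}[a(u,v)]$. Averaging this over the whole of $V'$ gives exactly $\avg_{(u,v) \in U_j \times V'}[a(u,v)]$. Since $V_j$ is chosen to consist of the top $N/E$ values of $v \mapsto \avg_{u \in U_j}[a(u,v)]$, and $|V_j| = N/E \le |V'| = N$, the average of any function over its top $m$ values is at least its average over all values; hence $\avg_{(u,v) \in U_j \times V_j}[a(u,v)] \ge \avg_{(u,v) \in U_j \times V'}[a(u,v)]$.

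Next I would assemble these bounds. Because the $U_j$ partition $U'$ into $E$ equal parts of size $N/E$, and each $V_j$ also has size $N/E$, the set $S = \bigcup_{j \le E}(U_j \times V_j)$ is a disjoint union of $E$ blocks each of size $(N/E)^2$, so
\[
  \avg_S[a] \;=\; \frac{1}{E}\sum_{j \le E} \avg_{U_j \times V_j}[a]
  \;\ge\; \frac{1}{E}\sum_{j \le E} \avg_{U_j \times V'}[a].
\]
Similarly, since the $U_j$ partition $U'$ into equal parts, $\avg_{U' \times V'}[a] = \frac{1}{E}\sum_{j \le E}\avg_{U_j \times V'}[a]$. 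Comparing the two right-hand sides termwise gives $\avg_S[a] \ge \avg_{U' \times V'}[a]$, which is the claim. The only place the hypothesis $E \le N/2$ matters is to guarantee $N/E$ is a well-defined integer at least $2$ and that $V_j \subseteq V'$ is possible; the powers-of-$2$ hypothesis just makes $N/E$ an integer.

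The main obstacle is essentially bookkeeping rather than a genuine difficulty: one must be careful that $S$ really is a disjoint union (which holds because the $U_j$ are disjoint, even though different $V_j$ may overlap) so that the uniform distribution on $S$ decomposes as the uniform mixture of the uniform distributions on the $U_j \times V_j$ blocks — this uses that all blocks have the same size $(N/E)^2$. Once that decomposition is in hand, the argument is the standard "top-$m$ average dominates the full average" fact applied blockwise, and the termwise comparison closes it. I would write out the two displayed equalities and the one inequality above, note the block-size equality, and invoke the top-$m$ fact; no calculation beyond this is needed.
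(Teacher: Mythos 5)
Your argument is correct, but it takes a different route from the paper. The paper disposes of this lemma in one line by the probabilistic method: choose each $V_j$ uniformly at random among the size-$N/E$ subsets of $V'$; then the expected value of $\avg_{S}[a]$ equals $\avg_{U'\times V'}[a]$ (again using that the $U_j$ form an equal-size partition, so $S$ is a uniform mixture of equal-size blocks), and some realization must achieve at least this expectation. You instead make an explicit extremal choice, taking $V_j$ to be the $N/E$ elements $v$ with the largest block averages $\avg_{u\in U_j}[a(u,v)]$, and then compare termwise via
\[
  \avg_{S}[a] \;=\; \frac{1}{E}\sum_{j\le E}\avg_{U_j\times V_j}[a]
  \;\ge\; \frac{1}{E}\sum_{j\le E}\avg_{U_j\times V'}[a]
  \;=\; \avg_{U'\times V'}[a].
\]
Both proofs hinge on exactly the same structural facts — the $U_j$ are disjoint and of equal size, and all blocks $U_j\times V_j$ have the same cardinality, so the uniform distribution on $S$ decomposes as a uniform mixture — and your careful statement of this decomposition is the right thing to make explicit. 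What your version buys is a deterministic, constructive selection of the $V_j$ (and it makes transparent that the ``top-$m$ average dominates the full average'' step is all that is needed); what the paper's version buys is brevity and no need to order the $v$'s. Two small remarks: the lemma as stated does not fix $\abs{V'}$ (you implicitly set $\abs{V'}=N$, as in the application), but your argument only needs $\abs{V'}\ge N/E$ so nothing is lost; and, as you correctly suspect, the hypothesis $E\le N/2$ plays no role in this lemma beyond ensuring the sizes make sense — it is inherited from the surrounding construction.
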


\begin{proof}
  This follows by the probabilistic method using a uniformly random selection of the sets~$V_1, \ldots, V_E$.
\end{proof}

\noindent
For an integer $k$, let $[X \ge k]$ denote the function that maps a string $u$ to $1$ if $X_u \ge k$ and to $0$ otherwise. Similar for $[Z \ge k]$.

\begin{lemma}\label{lem:oscillations_help} Let $g \colon U' \rightarrow [0,1]$ and $h \colon U' \times V' \rightarrow [0,1]$. If the players satisfy the row restion, then
\begin{align}
  \avg_{U'} \; [g \cdot [X \ge n'-i]] \;&\ge\; \avg_{U'} \; [g] - 2^{-i} \tag{\texttt{EX}}\label{eq:avgOver_X} \\
  \avg_{U' \times V'} [h \cdot [Z \ge n'-i]] \;&\ge\; \avg_{U' \times V'} [g] - 2^{-i}. \tag{\texttt{EZ}}\label{eq:avgOver_Z} 
\end{align}
\end{lemma}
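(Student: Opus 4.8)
\textbf{Proof plan for Lemma~\ref{lem:oscillations_help}.}
The plan is to prove both inequalities by the same one-line counting argument, using the row restriction to bound how many strings (resp.\ pairs) can have a small value of $X$ (resp.\ $Z$). Recall that $X_u$ is the minimal row index containing a token in column $\mathrm X_u$, or $n$ if the column is empty; so $X_u \le n'-i-1$ exactly means that one of the players has placed a token in column $\mathrm X_u$ in some row with index at most $n'-i-1$. The row restriction bounds, for each row $r$, the number of tokens placed by a single player in row $r$ of $\mathrm X$ by $2^r$; summing the geometric series over all rows $r \le n'-i-1$ gives at most $2^{n'-i}-1 < 2^{n'-i}$ tokens per player, hence fewer than $2^{n'-i+1}$ columns $u$ with $X_u < n'-i$ (counting both players). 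Wait---I should be slightly more careful about constants, but the point is that the count is $O(2^{n'-i})$; I will absorb the precise constant into the statement's $2^{-i}$ by the right choice of indexing, or simply track it honestly. In any case, the number of ``bad'' columns $u$ with $X_u < n'-i$ is at most some fixed multiple of $2^{n'-i}$, while $|U'| = 2^{n'}$ (at the relevant recursive stage), so the bad columns form at most a $2^{-i+O(1)}$ fraction of $U'$.

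The main step is then the trivial estimate: since $g$ takes values in $[0,1]$,
\[
  \avg_{U'}\big[g\big] - \avg_{U'}\big[g\cdot[X\ge n'-i]\big]
  \;=\; \avg_{U'}\big[g\cdot(1-[X\ge n'-i])\big]
  \;\le\; \avg_{U'}\big[1-[X\ge n'-i]\big]
  \;=\; \Pr_{u\in U'}[X_u < n'-i],
\]
and the right-hand side is at most $2^{-i}$ (after the indexing is set up so that the constant works out, which is the routine part). This gives~\eqref{eq:avgOver_X}. For~\eqref{eq:avgOver_Z} the argument is identical, working inside a single slice: $Z_{uv}$ is controlled by the row restriction on $\mathrm Z_u$, which limits the tokens of each player in row $r$ of that slice to $2^r$; so for fixed $u$, fewer than $O(2^{n'-i})$ strings $v$ have $Z_{uv} < n'-i$, hence (averaging over $u$ as well) the pairs $(u,v)\in U'\times V'$ with $Z_{uv}<n'-i$ are a $2^{-i+O(1)}$ fraction, and the same truncation-of-a-$[0,1]$-function bound yields~\eqref{eq:avgOver_Z}.

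I expect no real obstacle here; the only thing to get right is bookkeeping the constants so that the clean bound $2^{-i}$ (rather than $c\cdot 2^{-i}$) holds as stated---this may require either a small shift in how $i$ is used downstream, or noting that the geometric sum $\sum_{r<n'-i}2^r = 2^{n'-i}-1$ is already strictly below $2^{n'-i}$, so the ``per player'' count is fine and the factor of $2$ from having two players is what needs a word of justification (e.g.\ Alice's tokens alone determine the relevant columns in the strategy's analysis, or one simply replaces $2^{-i}$ by $2^{-i+1}$ throughout). Since the lemma is labeled ``trivial'' and is only invoked as a convenience, I would state and prove it in two or three lines, citing the row restriction and the $[0,1]$-valuedness of $g$ and $h$, and leave the exact constant to the reader or match it to whatever is needed in the proof of Lemma~\ref{lem:stratG}.
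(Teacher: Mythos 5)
The paper never writes out a proof of this lemma (it is introduced as ``technical and trivial''), so there is no official argument to compare against; your truncation-plus-counting plan is exactly the intended one. The two ingredients are right: since $g,h$ have values in $[0,1]$, the deficit $\avg_{U'}[g]-\avg_{U'}[g\cdot[X\ge n'-i]]$ is at most $\Pr_{u\in U'}[X_u<n'-i]$, and the row restriction bounds the number of tokens ever placed in rows $0,\dots,n'-i-1$ of $\mathrm X$ (respectively of a fixed slice $\mathrm Z_u$) by the geometric sum $2^{n'-i}-1$ per player, while $|U'|=|V'|=2^{n'}$ at every stage of the recursion. The slice-wise argument for \eqref{eq:avgOver_Z} is also correct as you set it up (and the $[g]$ on its right-hand side is indeed a typo for $[h]$).

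The one point you leave open --- the factor of two from counting both players' tokens --- should be settled by the first of your two suggested fixes, not the second. At the rounds where the lemma is invoked in the proof of Lemma~\ref{lem:stratG}, Alice's tokens on $\mathrm X$ sit at heights at least $n'+1$ and her tokens on the slices $\mathrm Z_u$ (placed in the previous call, with parameter $n'+e$) at heights at least $n'$, so only Bob's tokens can push $X_u$ or $Z_{uv}$ below the threshold $n'-i$; the per-player count $2^{n'-i}-1<2^{n'-i}$ then gives the clean $2^{-i}$. By contrast, weakening the conclusion to $2^{-i+1}$ is not a cost-free bookkeeping choice: the lemma is used with $i=2$, and the chain leading to \eqref{eq:goal} gains $1$ oscillation from the Claim while paying the bound three times (once via \eqref{eq:avgOver_Z}, once via \eqref{eq:avgOver_X}, once via $\avg_S[I_3]\ge 1-2^{-i}$), so the net gain is $1-3\cdot 2^{-i}$; with $2^{-i}$ replaced by $2^{-i+1}=1/2$ this becomes negative and the proof of Lemma~\ref{lem:stratG} yields nothing. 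One would then have to switch to $i=3$ and re-verify the numerical check in the Claim, which is a genuine (if small) change downstream. So either prove the stated $2^{-i}$ via the height of Alice's tokens at the relevant rounds, or state the lemma as a per-player bound; do not just absorb the factor $2$.
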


\begin{proof}[Proof of  Lemma~\ref{lem:stratG}.] 
  The idea is to create pairs $(u,v)$ in which $Z_{uv}/M_{uv}$ oscillates between $\tfrac{n-3}{n}$ and $\tfrac{n-3c-9}{n-3}$. 
  Again we initialize $n' = n-1$, and let $U'$ and $V'$ be disjoint sets of $\{0,1\}^n$ of size $2^{n'}$. The strategy is recursive in~$n'$. 

  \bigskip
  \begin{samepage}
  {\em Alice's recursive strategy inside disjoint sets $U'$ and $V'$ of size $2^{n'}$, started at round~$t$.} 

  Let $e = 4c$. If $n' \le e$, then the strategy immediately terminates.
  Assume $n' \ge e+1$.
  If $n' < n-1$, Alice places a token in $\mathrm X_u$ at height $n'+1$ for all $u \in U' \cup V'$.
  Then she waits for Bob's reply. If he does not satisfy his requirements, the strategy terminates. Otherwise, 
  the game proceeds to round~$t+1$.

    Let $(U_1, V_1), \ldots, (U_E, V_E)$ be a sequence that satisfies the conditions of Lemma~\ref{lem:highRatioG} for $E = 2^{e}$, $N = 2^{n'}$ 
  and a function $a$ that we determine later. 
  For all $j \le E$ and all pairs $(u,v) \in U_j \times V_j$, Alice places a token in $\mathrm Z_{uv}$ at height $n' - e$. 
  Then it is Bob's turn.
  If he does not satisfy the requirements, the strategy terminates. Otherwise, the game proceeds to round $t+2$. 

  Alice selects an index $j \le E$ such that the average in \eqref{eq:avgS} below is not smaller than the average of this function over~$U_j \times V_j$.
  Then she runs the strategy recursively 
  for $U' \leftarrow U_j$, $V' \leftarrow V_j$ and $n' \leftarrow n'-e$.
  \\{\em End of the strategy.} 
  \end{samepage}

  \bigskip
  \noindent
  Let $X^{(t)}_u$ and $Z^{(t)}_{uv}$ represent the values of the column at the end of round~$t$.
  We show that except for the first and the last, in each recursive calls starting in a round~$t$ with parameter~$n'> e$, 
  the following claim holds:
  \begin{claim*}
      If $X^{(t-1)}_u \ge n'+e-2$, 
      $Z^{(t)}_{uv} \ge n'-2$ and $X^{(t+1)}_u \ge n'-2$ then $\osc_{t+1}(uv) \ge 1 + \osc_{t-1}(uv)$.
  \end{claim*}

  \begin{proof}[Proof of the claim.]
  Note that Bob must satisfy requirement~\eqref{eq:requirement_c} in rounds $t-1, t$ and $t+1$, since otherwise there is no next 
  recursive call, contrary to what we assumed.
  At the end of round~$t-1$, 
    we decreased $Z_{uv}$, thus $Z^{(t)}_{uv} \le (n'+e) - e = n'$. Together with $M_{uv} \ge X_u$ and the assumption of the claim, this implies
  \[
    f(uv, t-1) \;<\; \frac{Z^{(t-1)}_{uv}+c}{M^{(t-1)}_{uv}} \;\le\; \frac{n'+c}{n'+e-2}. 
  \]
  After Alice's first move in the recursive call, we have $M_{uv} \le n'+1$. Thus by the assumption of the claim,
  \[
    \frac{n'-3}{n'+1+c} \;\le\; \frac{Z^{(t)}_{uv}-1}{M^{(t)}_{uv}+c} \;<\; f(uv, t).
  \]
  After Alice's second move, we have~$Z_{uv} \le n'-e$. Thus by the assumption of the claim,
  \[
    f(uv, t+1) \;<\; \frac{Z^{(t+1)}_{uv}+c}{M^{(t+1)}_{uv}} \; \le \; \frac{n'-e+c}{n'-2}. 
  \]
  One may calculate that for $e = 4c$ and $c \ge 3$, this implies $f(u,v,t-1) < f(u,v,t)$ and $f(u,v,t+1) < f(u,v,t)$.
    (This only needs to be checked for the worst-case values $c = 3$ and $n' = 13 > e$.)
  The claim is proven.
  \end{proof}

  \medskip
  \noindent
  To show that the strategy wins the game, we prove that under the assumptions of the claim, we have:
  \begin{equation}\tag{*}\label{eq:goal}
    \avg_{U_j \times V_j} \Big[ \osc_{t+1} + [X^{(t+1)} \ge n' - 2]\Big]  \;\,\ge\,\; \frac{1}{4} \,+\, \avg_{U' \times V'} \Big[ \osc_{t-1} + [X^{(t-1)} \ge n' + e - 2] \Big] \,,
  \end{equation}
  where $j$ is the index selected in the strategy. 
  Recall that this average is taken uniformly over the set of  {\em ordered} pairs~$U_j \times V_j$.
  Note that an expected value over an indicator function is at most~$1$.
  Since the strategy can execute a linear number of rounds in~$n$, this implies that $f$ makes a linear number of oscillations.

  We prove the inequality in several steps. 
  Let $I_1 = [X^{(t-1)} \ge n' + e - 2]$, which is the first assumption inside the claim. 
  We start with the expectation in the right-hand side and subtract~$1/4$:
  \[
    \avg_{U' \times V'} \Big[ \osc_{t-1} + I_1 \Big] \;-\; \frac{1}{4}. 
  \]
  We apply inequality \eqref{eq:avgOver_Z} above for $h = I_1$:
    \[
      \le \,\;\avg_{U' \times V'} \Big[ \osc_{t-1} + I_1 \cdot [Z^{(t)} \ge n' - 2] \Big].
    \]
  Intuitively, this equation expresses that after the first move of Alice, Bob can not decrease $Z_{uv}$ on too many pairs $(u,v)$.
  We choose $S$ as in Lemma~\ref{lem:highRatioG} with $a$ being the function inside this expectation, 
  i.e., $a = \osc_{t-1} +  I_1\cdot I_2$ where $I_2 = [Z^{(t)} \ge n' - 2]$ is the second condition inside the claim.
  Thus, 
    \[
      \le \,\;\avg_{S} \Big[ \osc_{t-1} + I_1 \cdot I_2\Big].
    \]
  In her second move, Alice decreases $Z_{uv}$ for all $(u,v) \in S$. The following inequality expresses that 
  Bob can not respond by decreasing too many values of $X$. Note that after projecting a uniformly random pair in $S$ to its first coordinate, 
  we obtain the uniform distribution in~$U'$, because the sets $U_1, \ldots, U_E$ form a partition of~$U'$. 
  Similarly, we can write an expectation over $S$ as an expectation of some function over $U'$, and apply~\eqref{eq:avgOver_X}.
  We obtain the following 
  \[
    \le \;\, \avg_{S} \Big[ \osc_{t-1} + I_1 \cdot I_2 \cdot [X^{(t+1)} \ge n' - 2] \Big] \;+\; \frac{1}{4}.
  \]
  Let $I_3 = [X^{(t+1)} \ge n' - 2]$ be the 3rd condition of the claim. Since $\avg_{U'} [I_3] \ge 3/4$, we have
  \[
    \le \;\, \avg_{S} \Big[ \osc_{t-1} + I_1 \cdot I_2 \cdot I_3 + I_3 \Big] \;+\; \frac{1}{4} \;-\; \frac{3}{4}. 
  \]
  By the claim this is
  \begin{equation}\tag{**}\label{eq:avgS}
    \le \;\, \avg_{S} \Big[ \osc_{t+1} + I_3\Big] \;-\; \frac{1}{2}.
  \end{equation}
  Finally, Alice selects $j$ and hence the pair of subsets $(U_j, V_j)$, such that this expectation does not decrease.
  Thus, in the above inequality, we may replace the average over $S$ by the average over $U_j \times V_j$.
  After rearranging the additive constants, we obtain the required inequality~\eqref{eq:goal}.
  The lemma and hence also  Theorem~\ref{th:oscillations} is proven.
\end{proof}

\bibliographystyle{plain}
\bibliography{kolmogorov}

\appendix

\section{Definition of Kolmogorov complexity}
\label{sec:introKolm}

\subsection{Plain complexity}
\label{ss:introC}

Given a Turing machine $M$ that maps pairs of strings to strings, let
\[
  \C_M(x \mid y) \; = \; \min \left\{ \textnormal{length}(p) : M(p,y) = x \right\},
\]
and let $\C_M(x) = \C_M(x \mid \text{empty string})$.
A Turing machine $U$ is {\em optimal} if for every other Turing machine~$M$ there exists a constant~$c$ 
such that $\C_U(\cdot \mid \cdot) \le \C_M(\cdot \mid \cdot) + c$.
We fix an optimal machine $U$ and write $\C(\cdot \mid \cdot) = \C_U(\cdot \mid \cdot)$.
We define the complexity of an integer $n$ by associating it to the 
string containing $n$ zeros.
More generally, we define the complexity for tuples, sets and other objects, by associating them to strings 
in some computable way.

\medskip
\noindent
 {\em Simple properties.} 
\begin{itemize}[leftmargin=*]
  \item 
    For every $n$-bit string $x$, we have $\C(x) \le n + O(1)$, because we can consider the trivial machine 
    that halts immediately, and for which $M(x,y) = x$ for all $x$ and~$y$. 
    We have $\C_M(x) = n$, and by optimality of $U$, we have $\C(x) \le n+O(1)$.

  \item 
    If $f$ is a computable function, then $\C(f(x)) \le \C(x) + O(1)$.
    Hence, if $y$ has length $n$ and every bit at an odd position is zero, 
    then $\C(y) \le n/2 + O(1)$.

  \item 
    For every integer, we have $\C(n) \le \log n + O(1)$, since each positive integer has a binary representation 
    of size $\log n + 1$.

  \item 
    For all $n$, there exists an $n$-bit that satisfies $\C(x) \ge n$. 
    Indeed, the number of programs less than $n$ is at most $2^0 + 2^1 +\ldots + 2^{n-1} = 2^n - 1$, 
    hence, there must be some string $x$ that has no such program.
    More generally, the fraction of $n$-bit strings $x$ with $\C(x) \ge n-k$ exceeds $1-2^{-k}$.
\end{itemize}

\begin{lemma}\label{lem:Kolm_not_computable}
 The function $\C(\cdot)$ is not computable. 
\end{lemma}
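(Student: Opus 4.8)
\textbf{Proof proposal for Lemma~\ref{lem:Kolm_not_computable}.}
The plan is to reduce the computability of $\C(\cdot)$ to the construction of short descriptions of high-complexity strings, the classical Berry-paradox argument. Suppose, for contradiction, that $\C(\cdot)$ were a computable function. Then the predicate ``$\C(x) \ge m$'' is decidable uniformly in $x$ and $m$, and moreover, since for every $m$ there is an $m$-bit string $x$ with $\C(x) \ge m$ (the last bullet in the ``simple properties'' list above), for every $m$ there certainly exists \emph{some} string $x$ with $\C(x) \ge m$, so a brute-force search over all strings in, say, length-lexicographic order always finds one.

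First I would define a Turing machine $M$ that, on input (a string coding) the integer $m$, searches through all strings $x$ in length-lexicographic order, computes $\C(x)$ for each using the assumed algorithm, and outputs the first $x$ it finds with $\C(x) \ge m$; call this string $x_m$. This machine halts on every input $m$ by the existence remark above. Hence $\C_M(x_m) \le \C_M(m) \le \log m + O(1)$, using that an integer $m$ has a description of length $\log m + O(1)$ even on the trivial machine and hence on $M$; by optimality of $U$ we get $\C(x_m) \le \log m + O(1)$. On the other hand, by construction $\C(x_m) \ge m$. Combining, $m \le \log m + O(1)$, which is false for all sufficiently large $m$. This contradiction shows $\C(\cdot)$ is not computable.

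There is essentially no obstacle here; the only point requiring a little care is making sure the reduction is stated so that $m$ (not $x_m$) is the input to $M$, so that the description length is $\log m + O(1)$ rather than something larger, and confirming that $M$ indeed halts on all inputs so that $\C_M$ is well-defined on $x_m$. Both are immediate from the simple properties already recorded. I would keep the write-up to a few lines, since this is the standard proof and the paper explicitly reuses this argument (via Lemma~\ref{lem:K_has_no_lowerbound}) in a relativized form later.
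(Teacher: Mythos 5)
Your proposal is correct and is essentially the paper's own argument: the paper defines $B(n) = \min\{B' : \C(B') \ge n\}$ and derives $\C(B(n)) \le \C(n) + O(1) \le O(\log n)$, which is exactly your Berry-paradox search phrased as a computable function of $n$ rather than an explicit machine $M$. The only cosmetic difference is that you spell out the machine and the halting/optimality bookkeeping, which the paper leaves implicit.
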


This result follows from an argument similar to Berry's paradox, which 
considers: ``The smallest number that can not be described in less than 20 words''. If this number is well defined, then 
than the statement is a contradiction. 

\begin{proof}
  Consider
 \[
   B(n) \;=\; \min \left\{B' : \C(B') \ge n \right\}.
 \]
 If $\C$ where computable, then also $B$ would be computable. Hence for all $n$:
 \[
   \C(B(n)) \;\le\; \C(n) + O(1) \;\le\; O(\log n). 
 \]
 This contradicts the definition of $B(n)$ for large~$n$. 
\end{proof}

\subsection{Prefix complexity}
\label{ss:introK}

A Turing machine $M$ is {\em prefix-free} if for every pair $(p,y)$ such that $M(p,y)$ halts, 
there exist no strings $q$ that have prefix $p$ and for which $M(q,y)$ halts. In other words, 
for each $y$, the set $\{p : M(p,y) \text{halts}\}$ is a prefix-free set.

There exist optimal prefix-free machines $U$: for every other prefix-free machine $M$, 
there exists a constant $c$ such that $\C_U(\cdot \mid \cdot) \le \C_M(\cdot \mid \cdot) + c$.
We fix such an optimal machine $U$, and write $\K(x \mid y) = \C_U(x\mid y)$. 

\medskip
\noindent
 {\em Simple properties.} 
\begin{itemize}[leftmargin=*]
  \item $\K(x \mid y) \ge \C(x \mid y) - O(1)$.
  \item 
    For every $n$-bit string $x$, we have $\K(x\mid n) \le n + O(1)$. Indeed, consider the mapping
    $M(p,n) = p$ if $p$ has length $n$ and is undefined otherwise. This mapping defines 
    a prefix-free machine, and the result follows by optimality of~$U$.

  \item 
    $\K(x,y) \le \K(x) + \K(y \mid x) + O(1)$. This holds, by considering the concatenation 
    of programs for the universal machine.

  \item 
    $\K(y) \le n + O(\log n)$ for every $n$-bit $y$.
    This holds by the previous item for $x = n$, and $\K(n) \le O(\log n)$.

  \item 
    The function $\K(\cdot)$ is non-computable as well, for the same reasons as for plain complexity.
\end{itemize}

\subsection{The information distance}
\label{ss:infoDist}

This distance was defined in~\cite{infoDistance} as $\id(x,y) = \max \{\K_U(x \mid y), \K_U(y \mid x)\}$, 
where $U$ is a machine that makes this distance minimal up to an additive constant.  
Let $c$ be a constant and consider the function 
\[
 D(x,y) = 
   \begin{cases}
     \id(x,y)+c & \text{ if } x \not= y \\
     0  & \text{ if } x= y.
   \end{cases}
\]
$D$ satisfies the axioms of a metric for a large~$c$:
\begin{itemize}
  \item $D(x,y) \ge 0$,
  \item $D(x,y) = 0$ if and only if $x=y$,
  \item $D(x,y) = D(y,x)$,
  \item $D(x,z) \le D(x,y) + D(y,z)$, (triangle inequality).
\end{itemize}

\noindent
The last property follows from
\[
  \K(x \mid z) \;\le\; \K(x \mid y) + \K(y \mid z) + O(1),
\]
and the symmetric inequality for $\K(z \mid x)$, by setting $c$ equal to 
the $O(1)$ constant (which only depends on the choice of $U$); this inequality follows by concatenating programs.

In~\cite{infoDistance}, it is shown that for a suitable\footnote{
  It is not enough that $U$ is an optimal prefix-free Turing machine, as explained in~\cite[Propositions 1 and 2]{idRevisited}.
  }
optimal prefix free machine $U$, we have that 
\[
  \id(x,y) \;=\; \min \{\text{length}(p) : U(p,x) = y \text{ and } U(p,y) = x \} + O(\log \id(x,y)).
\]
In~\cite{idRevisited} it is shown that the logarithmic precision can be improved to $O(1)$ for strings $x,y$ 
of length $n$ with $\id(x,y) \ge 6\log n$, but can not be improved for strings that have at most logarithmic distance.

\end{document}